\documentclass[11pt,a4paper]{article}

\usepackage[utf8]{inputenc}
\usepackage[T1]{fontenc}
\usepackage{mathptmx}

\usepackage{amsmath,amssymb,amsthm,bm,dsfont}

\usepackage{graphicx}
\usepackage{subcaption}
\usepackage{multirow}
\usepackage{nicematrix}
\usepackage{dcolumn}

\usepackage[a4paper,margin=1in]{geometry}

\usepackage{enumitem}
\usepackage{pifont}

\usepackage{tikz}

\usepackage[round,authoryear]{natbib}
\usepackage[colorlinks=true,allcolors=black,citecolor=black]{hyperref}

\usepackage{etoolbox}
\usepackage{comment}
\usepackage{fix-cm}

\newlist{todolist}{itemize}{2}
\setlist[todolist]{label=$\square$}

\newcommand{\E}{\operatorname{E}}
\newcommand{\Var}{\operatorname{Var}}
\newcommand{\Cov}{\operatorname{Cov}}

\newtheorem{theorem}{Theorem}
\newtheorem{lemma}{Lemma}
\newtheorem{proposition}{Proposition}
\newtheorem{corollary}{Corollary}
\newtheorem{definition}{Definition}
\newtheorem{remark}{Remark}
\newtheorem{assumption}{Assumption}

\newcommand{\ignore}[1]{}

\date{}

\begin{document}

\begin{center}
    \vspace*{-2em}
    {\LARGE\bfseries Ordinal Patterns Based Testing of Spatial Independence in Irregular Spatial Structures\par}
    \vspace{1.5em}

    {\large
    \textbf{Giorgio Micali}$^{1}$,
    \textbf{David Garnés-Galindo}$^{2}$,
    \textbf{Mariano Matilla-García}$^{3}$,
    \textbf{Manuel Ruiz-Marín}$^{2}$\par}
    \vspace{1em}

    {\small
    $^{1}$Department of Applied Mathematics, University of Twente, Enschede, The Netherlands\\[0.5em]
    $^{2}$Departamento de Métodos Cuantitativos, Universidad Politécnica de Cartagena, Cartagena, Spain\\[0.5em]
    $^{3}$Facultad de Económicas y Empresariales, UNED, Madrid, Spain\\[0.8em]
    \texttt{g.micali@utwente.nl, david.garnes@upct.es, mmatilla@cee.uned.es, manuel.ruiz@upct.es}\par}
    \vspace{2em}
\end{center}
\begingroup
\renewcommand\thefootnote{}\footnotetext{Copyright (2026) Author(s). This article is distributed under a Creative Commons Attribution-NonCommercial-NoDerivs 4.0 International License (CC BY-NC-ND 4.0).}
\addtocounter{footnote}{-1}
\endgroup
\begin{abstract}
We propose a nonparametric test of spatial independence for data observed
on irregular, non-lattice point clouds $\mathcal{V}_{n}\subset\mathbb{R}^{2}$.
For each location $v\in\mathcal{V}_{n}$, we encode the local spatial
configuration through the ordinal pattern of the $m$ nearest-neighbour
observations, obtaining a symbolic representation that is invariant under
strictly monotone transformations and robust to outliers.
Under the null hypothesis of spatial independence, the local ordinal patterns
are i.i.d.\ and uniformly distributed over the symmetric group $\mathcal{S}_{m}$,
regardless of the unknown marginal distribution $F$.
We exploit this characterisation to construct a test statistic $L_{n}$
based on the additive log-ratio (ALR) transformation of the empirical
ordinal-pattern frequencies.
Invoking a central limit theorem for graph-dependent processes under a
graph-based $\alpha$-mixing condition, we establish that $L_{n}$ converges
in distribution to a $\chi^{2}_{m!-1}$ random variable, yielding an
asymptotically pivotal procedure with no nuisance parameters.
An extensive Monte Carlo study confirms that the $\chi^{2}_{m!-1}$
approximation is accurate already at moderate sample sizes, that the test
controls size at the nominal level, and that power increases monotonically
with the strength of spatial dependence.
Notably, the test detects dependence in both linear and nonlinearly transformed
spatial autoregressive models, illustrating the robustness that is
characteristic of ordinal-pattern methods.
Our framework extends the spatial ordinal-pattern testing paradigm from
regular lattices to general spatial supports, opening the door to
ordinal-pattern inference in the many applied settings where observations
are irregularly located.
\end{abstract}

\section{Introduction}
Spatial dependence plays a central role in scientific analysis because spatially
indexed observations are rarely independent. Proximity and connectivity induce
interactions that violate classical statistical assumptions, leading to invalid
inference if ignored. Beyond its methodological implications, spatial dependence
reflects fundamental generative mechanisms such as diffusion, spillovers, and
contagion. Testing for spatial dependence is therefore essential both to ensure
statistical validity and to uncover the processes governing spatial systems.
 
To formalize these ideas, consider a spatial domain represented by a finite set
of locations $\mathcal V_n = \{ v_1, \ldots, v_n\} \subset \mathbb{R}^2$, where each point
denotes the spatial coordinates of an observational unit, represented as
\[
  v_i = [v_{i1}, v_{i2}]^\top.
\]
At each location $v_s \in \mathcal \mathcal{V}_n$ we observe a real-valued measurement $X_s$. For
instance, $\mathcal{V}_n$ may correspond to the positions of shops within a
city and $X_s$ to the price of a specific product at location $v_s$. Spatial
dependence in the field $(X_s)_{v_s \in \mathcal V_n}$ may arise from competition among
shops, demographic variation, accessibility, or other spatially structured
interaction mechanisms.
 
Our goal is to develop a nonparametric statistical test for \emph{spatial
independence}. Formally, we consider
\[
\begin{aligned}
\mathcal{H}_0:\;& (X_s)_{v_s \in \mathcal V_n}\ \text{i.i.d.} \\
&\text{vs.} \\
\mathcal{H}_1:\;& (X_s)_{v_s \in \mathcal V_n}\ \text{spatially dependent.}
\end{aligned}
\]
 
The null hypothesis of spatial independence constitutes a canonical reference
model across a wide range of scientific disciplines, representing the absence of
interaction or coupling among spatial units. In the social sciences, rejecting
this null indicates the presence of spillovers, competition, or peer effects
\citep{clifford1981}; in ecology and epidemiology, it corresponds to dispersal
or contagion mechanisms \citep{Ripley1981,Cressie1993}; and in statistical
physics and complex systems, it reflects the emergence of correlations in
interacting random fields \citep{KindermannSnell1980,Newman2010}. Rejecting
$\mathcal{H}_0$ therefore corresponds to detecting structured spatial
organization---such as clusters, gradients, or anisotropy---arising from
interaction-driven processes.
 
We adopt a symbolic approach: local spatial configurations are encoded through
\emph{ordinal patterns}, and spatial dependence is inferred from deviations of
the empirical ordinal-pattern distribution from the uniform distribution
predicted under $\mathcal{H}_0$. Ordinal methods are nonparametric, invariant
under strictly monotone transformations, and robust to outliers and
distributional misspecification.
 
Ordinal-pattern (OP) methods are well-established in time-series analysis as
tools for detecting serial dependence. In recent years, there has been growing
interest in extending OP-based methodology to spatial settings. However,
existing works restrict attention to the case where data are observed on a
\emph{regular lattice}: the spatial field $(X_{\mathbf{t}})_{\mathbf{t} \in
\mathbb{Z}^2}$ is sampled on a rectangular grid, and spatial ordinal patterns
(SOPs) are constructed by ranking observations within $2 \times 2$ blocks of
neighboring sites. This framework underlies the SOP tests developed by Wei\ss{}
and coauthors, including entropy-based SOP tests~\citep{Weiss20241},
nonparametric SOP tests for spatial
dependence~\citep{weiß2025nonparametrictestingspatialdependence}, and OP-based
tests via Hilbert space-filling curves for higher-dimensional grid
data~\citep{weiss_2025}. These methods exploit the fact that, under
$\mathcal{H}_0$, SOPs from $2 \times 2$ lattice windows are i.i.d.\ uniformly
distributed over the $4!$ possible permutations, and deviations from uniformity
signal spatial dependence.
 
However, most existing SOP-based approaches fundamentally rely on the
\emph{reticular} structure of the underlying sampling scheme: observations must
lie on a regular grid in $\mathbb{R}^2$ (or $\mathbb{R}^3$), and local pattern
formation depends on fixed $2 \times 2$ windows or on the traversal of grid
points by a space-filling curve. These constructions implicitly impose artificial
symmetries and uniform neighborhood relations that rarely hold in empirical
systems.
 
Many spatial systems of scientific interest are instead defined on irregular
point clouds, where observational units are unevenly spaced, heterogeneously
distributed, or connected through nonuniform interaction structures. Spatial
interactions in such systems are governed by heterogeneous geometry and topology
rather than by regular Euclidean grids, and local connectivity patterns may vary
substantially across space. Concrete examples include shop locations in urban
environments, meteorological or environmental monitoring stations, and sensor
networks whose spatial placement is determined by logistical, geographical, or
infrastructural constraints. In all these settings, neighborhood size,
connectivity, and spatial scale vary across locations, and the notion of a fixed
$2 \times 2$ block is inherently absent. As a consequence, SOP methodologies
designed for regular lattices are not applicable, motivating the development of
inference tools capable of detecting spatial dependence on irregular spatial
supports \citep{Ripley1981,Cressie1993,Baddeley2015}.
 
Motivated by this need, we propose what appears to be the first
ordinal-pattern-based test of spatial independence for data sampled on
\emph{irregular, non-lattice point clouds}. Instead of relying on rectangular
lattice windows, we construct local ordinal patterns from nearest-neighbor
configurations for each $v \in \mathcal{V}_n \subset \mathbb{R}^d$. Under $\mathcal{H}_0$,
the neighborhood vectors consist of i.i.d.\ continuous observations, so their
ordinal patterns are i.i.d.\ uniform over $\mathcal{S}_m$. Departures from this
uniform distribution serve as diagnostic evidence for spatial dependence.
 
Our work thus extends the SOP testing paradigm from regular lattices to general
point clouds in $\mathbb{R}^d$, while preserving the key advantages of ordinal
methods: nonparametricity, invariance under monotone transformations, and
robustness to outliers and distributional contamination.

The remainder of the paper is organized as follows. In Section \ref{sec:mathfra}, we introduce the mathematical framework. Section \ref{sec:theorem} presents the spatial mixing condition and establishes a central limit theorem for ordinal patterns. In Section \ref{sec:cov}, we develop a consistent estimator of the covariance matrix. Section \ref{sec:test_statistics} introduces a test statistic for spatial independence along with its asymptotic distribution. In Section \ref{sec:monte}, we examine the finite-sample performance of the test statistic through an extensive Monte Carlo simulation study. Finally, Section \ref{sec:conclusions} concludes the paper.

\section{Mathematical Framework}\label{sec:mathfra}

For each $n\in\mathbb N$, let $\mathcal{M}_n:=\{1,\dots,n\}$ index a collection of spatial locations
$$ \mathcal{V}_n = \left\{v_{s}=\left[\begin{array}{c}
     v_{1s}  \\
     v_{2s}
\end{array} \right]\in\mathbb R^2:s\in \mathcal{M}_n \right\}$$ and associated observations
$\{X_{s}:v_s\in\mathcal V_n\}$.
Fix an integer $m\ge 2$, which we will referred to as \emph{embedding dimension}. For each node $v_s\in\mathcal V_n$,
let $B(s)\subset\mathcal V_n$ denote the set consisting of $v_s$ and its $(m-1)$ closest locations
according to Euclidean distance, i.e.
\[
B(s)
:=
\{v_s\}\,\cup\,
\operatorname*{arg\,min}_{\substack{
A \subset \mathcal V_n \setminus \{v_s\} \\
|A| = m-1
}}
\sum_{v_u \in A} \bigl\| v_{u} - 
v_{s} \bigr\|,
\]
with polar-angle ties broken deterministically. Then $$B(s)=\{v_s,v_{s_1},\dots,v_{s_{m-1}}\}$$
where the elements in $B(s)$ are ordered by increasing Euclidean distance to $v_s$, or by positive polar angles in case of ties. 

We then define the undirected $m$-nearest-neighbor graph
$G_n=(\mathcal V_n, \mathcal{E}_n)$ by declaring an edge between $v_s$ and $v_u$ whenever $v_u$ belongs to the
local neighborhood of $v_s$ (or vice versa), that is,
\[
\{v_s,v_u\}\in \mathcal{E}_n
\quad\Longleftrightarrow\quad
v_u\in B(s)\setminus\{v_s\}\ \ \text{or}\ \ v_s\in B(u)\setminus\{v_u\}.
\]
Let $d_n(\cdot,\cdot)$ denote the corresponding graph distance on $G_n$, i.e.\ the length of the shortest path in $G_n$.

To each location $v_s\in \mathcal{V}_n$ we define the local block
$X_{B(s)}:=(X_{u})_{v_u\in B(s)}$. Let $\mathcal S_m$ denote the symmetric group of order $m!$, that is, the group formed by all the permutations of the set $\{1, \ldots, m\}$. 

To each location $v_{s_0}\in \mathcal{V}_n$ we associate a permutation $\pi=(r_0,r_1,\dots,r_{m-1})\in \mathcal{S}_m$ satisfying
$$X_{s_{r_0}}\leq \,X_{s_{r_1}}\leq\,\dots\leq \,X_{s_{r_{m-1}}}\;.$$ and denote it as 
\[
\Pi(X_{B(s_0)}):=(r_0,r_1,\dots,r_{m-1})\in \mathcal S_m,
\]
where $\Pi(\cdot)$ returns the permutation describing the relative order of its arguments.
For a fixed $\pi\in \mathcal S_m$, we finally define the associated ordinal-pattern indicator variable
\begin{equation}
\label{eq:variables_Y}
    Y_{s}:=\mathbf 1\{\Pi(X_{B(s)})=\pi\},
\qquad v_s\in\mathcal V_n.
\end{equation}
Therefore, our object of interest is the spatial process $(X_{s})_{v_s\in \mathcal V_n}$ and its associated graph-process $G_n=\left( \mathcal V_n, \mathcal{E}_n, (Y_{s})_{v_s\in \mathcal V_n}\right)$. An illustration  can be found in Figure \ref{fig:OP_graph}. 
\begin{figure*}
    \centering
    \includegraphics[width=0.8\linewidth]{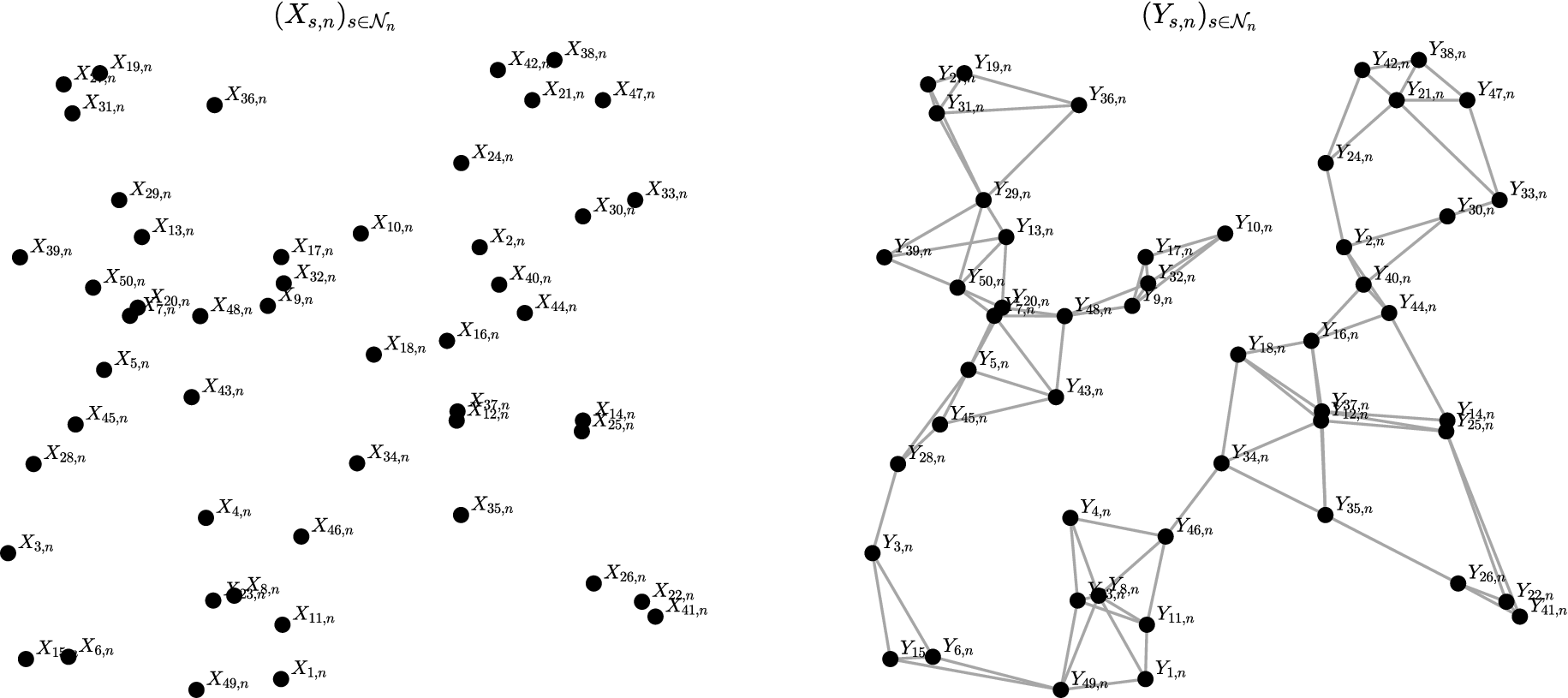}
    \caption{Ordinal patterns graph construction from initial process. }
    \label{fig:OP_graph}
\end{figure*}

\section{central limit theorem for spatial ordinal patterns}\label{sec:theorem}
In this section we investigate under which conditions a central limit theorem (CLT) for $\sum_{v_s\in \mathcal{V}_n} Y_{s}$ holds. Let us denote $\mathcal S_m=\{\pi_1, \ldots, \pi_{m!}\}$ and define the vector of ordinal patterns indicators as 
\begin{equation}
    \label{eq:Y_vector}
\mathbf Y_{s}
:=
\begin{pmatrix}
\mathbf 1\{\Pi(X_{B(s)})=\pi_1\}\\
\vdots\\
\mathbf 1\{\Pi(X_{B(s)})=\pi_{m!-1}\}
\end{pmatrix}\in\{0,1\}^{m!-1}.
\end{equation}

We work with $m!-1$ (rather than $m!$) coordinates because for each fixed $s$ the indicators satisfy
\[
\sum_{j=1}^{m!}\mathbf 1\{\Pi(X_{B(s)})=\pi_j\}=1\qquad\text{a.s.},
\]
so the $m!$-dimensional indicator vector has a degenerate covariance matrix. Dropping one coordinate
removes this linear constraint, while still determining the ordinal-pattern distribution.

To establish asymptotic results for sums of dependent random vectors, one must quantify and control the dependence they exhibit. In the time-series setting, a standard approach is to impose suitable moment conditions together with a \emph{mixing} assumption, for example \(\alpha\)-mixing, meaning that, for \((X_t)_{t\in\mathbb Z}\) a time series and \(\sigma(\cdot)\) denoting the \(\sigma\)-field generated by the indicated random variables,
\begin{align*}
    \alpha(h)=\sup_{k\in\mathbb Z}\alpha\bigl(\sigma(X_t:t\le k),\sigma(X_t:t\ge k+h)\bigr)\xrightarrow{h \to \infty} 0\;.
\end{align*}
Thus, the dependence between the past and the future weakens as their temporal separation, measured by the lag \(h\), increases. We refer the reader to \cite{Bradley2005} for a thorough overview of mixing coefficients and their technical foundations.

In spatial settings, however, there is no canonical linear ordering of the observations. As a result, dependence can no longer be described solely in terms of temporal lags. Instead, it must also reflect the arrangement of the vertices in space, and is therefore encoded through a graph that specifies which variables may interact and how separation is measured.  A natural analogue of large time lag is then large graph
distance: variables associated with vertices that are far apart in the graph should become nearly
independent. Mixing coefficients that depend on subsets of locations on dates back to \cite{Bulinskii1988}, and were further expanded by \cite{Bolthausen} and \cite{JENISH200986}. In this work, we adopt the notion of strong-mixing processes proposed by \cite{KOJEVNIKOV2021882}, which is reported as Definition 
\ref{def:cond_alpha_mixing} below.

\begin{definition}\label{def:cond_alpha_mixing}
Let \(\{Z_{v}\}_{v\in\mathcal V_n}\) be random variables 
and let \( \mathcal{F}\) be a \(\sigma\)-field.
For two subsets \(A,B\subset \mathcal V_n\), define the \(\sigma\)-fields
\[
\mathcal F_{A}:=\sigma\!\big(Z_{v}:v\in A\big),\qquad
\mathcal F_{B}:=\sigma\!\big(Z_{v}:v\in B\big).
\]
Let \(\mathrm{dist}(A,B)\) denote the graph distance between \(A\) and \(B\), i.e.
\[
\mathrm{dist}(A,B):=\min\{d_n(v_i,v_j):v_i\in A,\ v_j\in B\}.
\]
The  \(\alpha\)-mixing coefficient at distance \(h\ge 0\) is
\[
\alpha_n(h)
:=
\sup_{\substack{A,B\subset\mathcal V_n\\ \mathrm{dist}(A,B)\ge h\\E\in\mathcal F_{A},\,F\in\mathcal F_{B}}}
\Bigl|
\mathbb P(E\cap F )-\mathbb P(E )\,\mathbb P(F  )
\Bigr|.
\]
We say that \(\{Z_{v}\}_{v\in\mathcal V_n}\) is \(\alpha\)-mixing 
if \(\alpha_n(h)\to 0\) as \(h\to\infty\).
\end{definition}
Unlike the case of alpha mixing for time series, we have an extra dependence, i.e., dependence on both $h$ and $n$.  Many processes on graph are $\alpha$ mixing. Example \ref{ex:SAR_alpha_mixing} shows that this is the case for the widely used class of SAR models on graph (see \cite{LeSagePace2009}).  

\noindent
\textbf{Example.}
\label{ex:SAR_alpha_mixing}
Let \(G_n=(\mathcal V_n,\mathcal{E}_n)\) be an undirected graph, and let \(W=(w_{ij})_{v_i,v_j\in\mathcal V_n}\) be a weight matrix such that
\[
w_{ij}=0 \qquad \text{whenever } v_i\neq v_j \text{ and } (v_i,v_j)\notin \mathcal{E}_n.
\]
Assume moreover that \(W\) is row-stochastic, i.e.
\[
\sum_{v_i\in\mathcal V_n} |w_{ij}| = 1
\qquad \text{for all } j\in\mathcal M_n,
\]
and let \(\rho\in(-1,1)\). Consider the Gaussian SAR model $ \mathbf X=(X_{s})_{v_s\in\mathcal V_n}$ defined by
\begin{equation}
    \mathbf X=\rho W \mathbf X+\varepsilon
\end{equation}
and in reduced form
\begin{equation}
\label{eq:SAR}
\mathbf X=(I_n-\rho W)^{-1}\mathbf \varepsilon,
\end{equation}
where \(\mathbf \varepsilon=(\varepsilon_{s})_{v_s\in\mathcal V_n}\) has i.i.d.\ centered Gaussian entries with finite variance \(\sigma^2<\infty\). Then \(\mathbf X\) admits the expansion
\begin{equation}
\label{eq:SAR_series}
   \mathbf X=\sum_{k=0}^\infty \rho^k W^k \mathbf \varepsilon, 
\end{equation}
where the series converges absolutely in \(L^2\). Moreover, if \(v_i,v_j\in\mathcal V_n\) satisfy \(d_n(v_i,v_j)=h\), then
\[
\bigl(W^k\bigr)_{ij}=0
\qquad\text{for all } k<h.
\]
Using \eqref{eq:SAR_series}, the fact that $\|W\|_\infty\leq 1$ and that $\mathbf{\varepsilon}$ is centered zero-mean Gaussian vector with finite variance, there exists a constant $C>0$ such that
\[
|\Cov(X_i,X_j)|
\le C \sum_{m\ge h} |\rho|^m
\le C \frac{|\rho|^h}{1-|\rho|}.
\]
Hence there exist constants \(C'>0\) and \(\eta\in(0,1)\), independent of \(n\), such that
\[
|\Cov(X_i,X_j)|\le C'\,\eta^{d_n(v_i,v_j)}
\qquad\text{for all }v_i,v_j\in\mathcal V_n.
\]
Thus the covariance decays exponentially fast with the graph distance.

On the other hand the minimum eigenvalue of $\Sigma$ we proceed as follows. Let $\Sigma =\Cov(\mathbf{X})$. From \eqref{eq:SAR}
\[
\Sigma
=
\sigma^2 (I_n-\rho W)^{-1}(I_n-\rho W)^{-T},
\]
thus we have
\[
\Sigma^{-1}
=
\frac{1}{\sigma^2}(I_n-\rho W)^T(I_n-\rho W).
\]
it follows that 
\[
\lambda_{\min}(\Sigma)
=
\frac{1}{\lambda_{\max}(\Sigma^{-1})}
=
\frac{\sigma^2}{\lambda_{\max}\!\bigl((I_n-\rho W)^T(I_n-\rho W)\bigr)}.
\]
Here and throughout, \(\|\cdot\|_{2}\) denotes the operator norm induced by the Euclidean norm, i.e,
$\|A\|_2:=\sup\limits_{\|x\|_2=1}\|Ax\|_2\;.$
By the standard identity
\[
\lambda_{\max}(A^\top A)=\|A\|_{2}^2\;,
\]
and using the fact that \(W\) is symmetric, it follows that
\[
(I_n-\rho W)^T(I_n-\rho W)=(I_n-\rho W)^2,
\]
and hence
\[
\lambda_{\max}\!\bigl((I_n-\rho W)^T(I_n-\rho W)\bigr)
=
\|I_n-\rho W\|_{2}^2.
\]
Using the triangle inequality and \(\|W\|_{2}\le 1\),
\[
\|I_n-\rho W\|_{2}
\le \|I_n\|_{2}+|\rho|\,\|W\|_{2}
\le 1+\rho.
\]
Thus
\[
\lambda_{\max}\!\bigl((I_n-\rho W)^T(I_n-\rho W)\bigr)
\le (1+\rho)^2,
\]
and consequently
\[
\lambda_{\min}(\Sigma)
\ge \frac{\sigma^2}{(1+\rho)^2}>0.
\]
This bound is uniform in \(n\).
Since the $\mathbf X$ is Gaussian and nondegenerate, it follows that 
Proposition~\ref{prop:gaussian_alpha_bound} in Appendix \ref{appendix:auxiliary} applies with
\[
M=C'\,\eta^h,
\]
we obtain
\begin{align*}
    \alpha_{n}(h)
:=&
\sup_{\substack{A,B\subset\mathcal V_n\\ |A|\le a,\ |B|\le b\\  \mathrm{dist}(A,B)\ge h}}
\alpha\!\bigl(\sigma(X_{u}:u\in A),\sigma(X_{v}:v\in B)\bigr)
\\&\le O\left( n\,\eta^h \right).
\end{align*}
Therefore the Gaussian SAR model is size-restricted \(\alpha\)-mixing on the graph, with exponentially decaying mixing coefficients.

In the following, we introduce the mathematical objects needed to formalize the denseness properties of the graph, following the framework of \cite{KOJEVNIKOV2021882}. Their results show that a central limit theorem holds under two complementary conditions: (i) first, a sparsity condition preventing the graph from becoming too dense (for example, by requiring bounded or only slowly growing neighborhood sizes), and (ii) a decay condition ensuring that dependence weakens sufficiently fast as the graph distance increases. In particular, these two requirements must be balanced: if the graph becomes denser, then the decay of dependence with distance must be correspondingly stronger in order for a central limit theorem to remain valid.

For \(v_i\in\mathcal V_n\) and \(h\in\mathbb N\cup\{0\}\), let
\[
\mathcal V_n(v_i;h):=\{v_j\in \mathcal{V}_n:d_n(v_i,v_j)\le h\}
\]
denote the \(h\)-neighborhood of \(v_i\).
The corresponding \(h\)-th \emph{neighborhood shell} is
\begin{equation}\label{eq:def_shell}
\mathcal V^{\partial}_n(v_i;h)
:=
\mathcal V_n(v_i;h)\setminus \mathcal V_n(v_i;h-1),
\qquad h\ge 0,
\end{equation}
with the convention \(\mathcal V_n(v_i;-1):=\varnothing\). Equivalently,
\[
\mathcal V^{\partial}_n(v_i;h)=\{v_j\in \mathcal{V}_n:d_n(v_i,v_j)=h\}.
\]
Next, we introduce a measure that quantifies, on average, how quickly the number of vertices
at graph distance \(h\) from a typical node grows; it is thus a simple summary of the sparsity
of the network at radius \(h\). Such quantity is denoted by \(\delta_n^{\partial}(h;k)\) and defined as
\begin{equation}\label{eq:KMS_3_1}
\delta^{\partial}_n(h;k)
:=
\frac{1}{n}\sum_{v_i\in\mathcal V_n}\bigl| \mathcal V^{\partial}_n(v_i;h)\bigr|^{\,k}\;.
\end{equation}
 for  $k,h \in \mathbb{N}$.
Further, we want to capture the potential overlap and local clustering of neighborhoods, that is, how strongly
dependence can propagate through nearby network connections. To this end, we introduce a quantity that controls
how large the \(f\)-neighborhood of a node can remain after removing the part that is already “close”
(within distance \(h-1\)) to some vertex in its \(h\)-shell. This quantity is defined as
\begin{equation}\label{eq:KMS_3_2}
\Delta_n(h,f;k)
:=
\frac{1}{n}\sum_{v_i\in\mathcal V_n}
\max_{v_j\in \mathcal V^{\partial}_n(v_i;h)}
\bigl| \mathcal V_n(v_i;f)\setminus \mathcal V_n(v_j;h-1)\bigr|^{\,k}.
\end{equation}
Finally, we combine \eqref{eq:KMS_3_1} and \eqref{eq:KMS_3_2} into a single quantity that captures both
sparsity  and neighborhood overlap simultaneously.
\begin{equation}\label{eq:KMS_3_3}
c_n(h,f;k)
:=
\inf_{\beta>1}
\Bigl[\Delta_n\!\bigl(h,f;k\beta\bigr)\Bigr]^{1/\beta}
\Bigl[\delta^{\partial}_n\!\Bigl(h;\frac{\beta}{\beta-1}\Bigr)\Bigr]^{1-1/\beta}.
\end{equation}
The definition of \(c_n(h,f;k)\) uses a Hölder-type interpolation: the exponent \(\beta>1\) allows us to trade off
control of \(\Delta_n\) against control of \(\delta_n^{\partial}\) by splitting a product into two moments with
conjugate powers \(\frac{1}{\beta}\) and \(\frac{\beta-1}{\beta}\).
Taking the infimum over \(\beta>1\) selects the most favorable such trade-off for the given network, yielding the
optimal bound.

Before stating the main theorem of the paper we requiere the following assumption, adapted from condition ND in \cite{KOJEVNIKOV2021882}.
\begin{assumption}
\label{assumption1}
    There exist \(p>4\) and a sequence \(f_n\to\infty\) such that, the mixing coefficients $(\alpha_n (h))_{h,n}$ of $(X_{s})_{v_s \in \mathcal V_n}$ and the graph $G$ satisfy 
\begin{enumerate}[label=(\alph*)]
\item \(\alpha_{n}^{\,1-1/p}(f_n)=o_{\mathrm{a.s.}}(n^{-3/2})\); 
\item for each \(k\in\{1,2\}\),
\[
\frac{1}{n^{k/2}}\sum_{h\ge 0} c_n(h,f_n;k)\,\alpha_{n}^{\,1-\frac{k+2}{p}}(h)
= o_{\mathrm{a.s.}}(1);
\]
\end{enumerate}

\end{assumption}

\begin{theorem}\label{thm:cond_vector_clt}
Suppose that the spatial process \((X_s)_{v_s\in\mathcal V_n}\) satisfies Assumption~\ref{assumption1}. Fix \(m\ge 2\), and let \((\mathbf Y_s)_{v_s\in\mathcal V_n}\) denote the \((m!-1)\)-dimensional random vectors defined in \eqref{eq:Y_vector}. For \(r\ge 0\), define $
U_n(h):=\#\{(v_s,v_t)\in\mathcal V_n \times \mathcal{V}_n:\ d_n(v_s,v_t)=h\},$
that is, the number of vertices in \(\mathcal V_n\) at graph distance \(h\) from \(v_s\). Assume that there exist constants \(C_2>C_1>0\), independent of \(n\), such that for all \(n\), 
\begin{align}\label{eq:eig}
&\min_{v_s\in\mathcal V_n}\lambda_{\min}\!\Big(\Var(\mathbf Y_{s}  )\Big)\ \ge\ C_1
\qquad\text{a.s.}\tag{Eig}\\
&\sum_{r\ge 1}\alpha_n(h)\,U_n(h)\le C_2n\;.
\label{eq:A_mixing_small}\tag{Mix}
\end{align}
Define the centered partial sum and its variance as
\[
S_n
:=
\sum_{v_s\in\mathcal V_n}
\Big(\mathbf Y_{s}-\E[\mathbf Y_{s}  ]\Big)
\quad \mbox{ and } \quad 
\Sigma_n:=\Var(S_n  )
\]
respectively.
Then \(\Sigma_n\) is almost surely positive definite, and the following multivariate CLT holds:
\begin{equation}\label{eq:cond_vector_clt}
\sup_{\mathbf x\in\mathbb R^{m!-1}}
\left|
\mathbb P\!\left(\Sigma_n^{-1/2}S_n \le \mathbf x   \right)
-
\Phi(\mathbf x)
\right|
\xrightarrow[n\to\infty]{\mathrm{a.s.}} 0,
\end{equation}
where \(\le\) is understood componentwise and \(\Phi\) denotes the cumulative distribution function of the standard Gaussian law on
\(\mathbb R^{m!-1}\).
\end{theorem}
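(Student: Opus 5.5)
The plan is to reduce the statement to the multivariate CLT of \cite{KOJEVNIKOV2021882} (their Theorem 3.2 and its Cramér--Wold extension) for the graph-indexed process $(\mathbf Y_s)_{v_s\in\mathcal V_n}$. Four things must be checked: moments, mixing inherited from $X$, the network conditions (Assumption~\ref{assumption1}), and nondegeneracy of $\Sigma_n$.

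\textbf{Step 1: moments and inherited mixing.} Each coordinate of $\mathbf Y_s$ is an indicator, so $\|\mathbf Y_s\|\le 1$ and the moment condition $\sup_s\E\|\mathbf Y_s\|^p<\infty$ holds for every $p>4$.

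Each $\mathbf Y_s$ is a measurable function of $X_{B(s)}$, and every vertex of $B(s)$ is adjacent to $v_s$ in $G_n$. Hence for $A,B\subset\mathcal V_n$, $\sigma(\mathbf Y_u:u\in A)\subset\sigma(X_w:w\in A^{+1})$, where $A^{+1}$ is the closed $1$-neighbourhood of $A$. The two enlarged sets satisfy $\mathrm{dist}(A^{+1},B^{+1})\ge \mathrm{dist}(A,B)-2$. Therefore
\[
\alpha^{Y}_n(h)\le \alpha_n\bigl((h-2)\vee 0\bigr).
\]
Because of this fixed shift of $2$, the hypotheses of Assumption~\ref{assumption1} for $X$ should transfer to $Y$ after replacing $f_n$ by $f_n+2$ and reindexing the sum over $h$ in (b). Controlling the finitely many shifted terms $h\in\{0,1\}$ requires the bounds $c_n(h,f;k)\lesssim c_n(h+2,f;k)$ up to constants, or else bounded degree. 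Bounded degree holds in the $m$-NN graph in $\mathbb R^2$, since a point is among the $m-1$ nearest neighbours of at most a constant times $m$ points (kissing-number argument). I will use this to absorb the shift into constants.

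\textbf{Step 2: positive definiteness of $\Sigma_n$.} Write
\[
\Sigma_n=\sum_s\Var(\mathbf Y_s)+\sum_{s\neq t}\Cov(\mathbf Y_s,\mathbf Y_t).
\]
By (Eig), the diagonal part satisfies $\lambda_{\min}\ge C_1 n$. For the off-diagonal part, use the covariance inequality for bounded variables, $\|\Cov(\mathbf Y_s,\mathbf Y_t)\|\le 4\,\alpha^Y_n(d_n(v_s,v_t))$. Grouping pairs by their distance $h$ gives an operator-norm bound $\lesssim\sum_{h\ge1}\alpha_n(h-2)U_n(h)$, which (Mix) together with Step 1's shift bounds by $C_2 n$.

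Here lies the main obstacle. (Mix) only gives the off-diagonal part of order $n$, not $o(n)$, so we need $C_2$ small relative to $C_1$ (presumably the intended reading of "$C_2>C_1$" must be reversed or sharpened). My first attempt is to interpret (Mix) as bounding the off-diagonal contribution by a constant strictly below $C_1$, yielding $\lambda_{\min}(\Sigma_n)\ge (C_1-C_2')n>0$. Failing that, I would argue directly that $\lambda_{\min}(\Sigma_n/n)$ is bounded below via a variance lower bound on linear combinations $a^\top S_n$ using a blocking argument. Either way, the target is $\lambda_{\min}(\Sigma_n)\gtrsim n$, which is exactly the nondegeneracy required by Kojevnikov et al.

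\textbf{Step 3: conclusion via Cramér--Wold.} For a fixed unit vector $a$, apply the univariate CLT of \cite{KOJEVNIKOV2021882} to the scalar process $a^\top\mathbf Y_s$. Its variance $a^\top\Sigma_n a\ge c\,n$ by Step 2, and its mixing and network conditions come from Step 1. This gives $(a^\top\Sigma_n a)^{-1/2}a^\top S_n\Rightarrow N(0,1)$.

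Since $\lambda_{\min}(\Sigma_n)\gtrsim n$ and $\|\Sigma_n\|\lesssim n$, the matrices $\Sigma_n/n$ lie in a compact set of positive definite matrices. A subsequence argument, combined with the Cramér--Wold device applied to $\Sigma_n^{-1/2}S_n$, then yields convergence to $N(0,I_{m!-1})$.

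Finally, convergence in distribution to a continuous limit upgrades to uniform convergence of the CDFs by Pólya's theorem, giving \eqref{eq:cond_vector_clt}. The "a.s." qualifiers are carried through by conditioning on the realisation of the point cloud $\mathcal V_n$, since all hypotheses are stated almost surely.
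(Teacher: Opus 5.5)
The gap is your Step~2, and neither remedy you propose can close it. You are right that splitting $\Sigma_n$ into diagonal and off-diagonal parts only yields $\lambda_{\min}(\Sigma_n)\ge (C_1-\mathrm{const}\cdot C_2)\,n$, which is vacuous when $C_2>C_1$. The paper's proof makes exactly this split and then records the bound as $(C_2-C_1)n=:cn$ with $c>0$, which is a sign slip, so it has the same hole.

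Reading (Mix) ``the other way'' does not help. In your own shifted bound, the $h=1$ term alone is $4\alpha_n(0)U_n(1)\ge 4\cdot\tfrac14\cdot n=n$: take $A=B=\{v_s\}$ and a median event to get $\alpha_n(0)\ge\tfrac14$, and note every vertex has a neighbour. Meanwhile $C_1\le\Var(Y_{s,1})\le\tfrac14$, so no absolute-value bound on the off-diagonal part can beat the diagonal.

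A blocking argument cannot work either, because under the stated hypotheses the conclusion is false. Place the points in well-separated tight pairs and take $m=2$ with i.i.d.\ continuous $X_s$. Then $B(s)=\{v_s,v_t\}$ with $v_t$ the partner, and (say $\pi_1=(0,1)$) $Y_s=\mathbf 1\{X_s\le X_t\}=1-Y_t$ a.s. Hence $S_n\equiv 0$ and $\Sigma_n=0$. Yet all hypotheses hold:
\begin{itemize}
\item $\Var(Y_s)=\tfrac14$ gives (Eig);
\item $\alpha_n(h)=0$ for $h\ge1$ gives (Mix) and Assumption~\ref{assumption1}(a);
\item (b) reduces to the $h=0$ term $n^{-k/2}2^k\alpha_n(0)^{1-(k+2)/p}\to0$.
\end{itemize}
If $\alpha_n$ in (Mix) is read as the coefficients of $\mathbf Y$ instead, the sum is $n/4$, and $C_1=\tfrac15<C_2=\tfrac14$ works. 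For $m\ge3$, isolated clusters of $m$ points do the same to the combination ``$X_s$ is the minimum of $B(s)$'', whose sum over a cluster is identically $1$.

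So an extra nondegeneracy hypothesis is unavoidable. Either assume $\lambda_{\min}(\Sigma_n)\ge c\,n$ outright, which is exactly what reducing Condition ND of \cite{KOJEVNIKOV2021882} to Assumption~\ref{assumption1} requires, or verify it by computing $\Sigma_n$ exactly under $\mathcal H_0$ for the given geometry.

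Granting that, the rest of your plan is sound and in two places more careful than the paper:
\begin{itemize}
\item Your lag shift $\alpha^Y_n(h)\le\alpha_n((h-2)\vee 0)$ is correct. The paper's lemma derives the $2r_n$ shift and then drops it, and the theorem's proof uses ``the same mixing coefficients''.
\item The matrix normalisation $\Sigma_n^{-1/2}$ does require your compactness/subsequence step, not just Cram\'er--Wold for fixed $\mathbf a$.
\end{itemize}
Two small fixes are needed. First, to reindex (b) you need $c_n(h+2,f;k)\le D^2c_n(h,f;k)$ for maximal degree $D$, not the reverse inequality you wrote. This holds because $|\mathcal V^\partial_n(v_i;h+2)|\le D^2|\mathcal V^\partial_n(v_i;h)|$ and, by a geodesic argument, $\Delta_n(h+2,f;k)\le\Delta_n(h,f;k)$. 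Second, passing from $f_n$ to $f_n+2$ is not free, since $c_n(h,f;k)$ is nondecreasing in $f$. It is cleaner to impose Assumption~\ref{assumption1} directly on the mixing coefficients of $\mathbf Y$.
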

The proof can be found in Appendix \ref{appendix:proofs}. 
\begin{remark}
    Note that \eqref{eq:eig} does not hold if we include the last $m!$-th component $1\{\Pi(X_{B(s)})=\pi_{m!}\}$ into $\mathbf{Y}_{s}\;.$ To see this, note that the minimum eigenvalue  of $\Var(\mathbf Y_{s}  )$ would then always be 0. In fact, for the vector of all ones $\mathbf{1}=(1,\ldots,1)^\top$ ,
    $$\Var\!\big(\mathbf 1^\top \mathbf Y_{s}  \big)
=\mathbf 1^\top \Var(\mathbf Y_{s}  )\mathbf 1.$$
However, $\mathbf 1^\top \mathbf Y_{s}=1$ a.s, and thus $\Var\!\big(\mathbf 1^\top \mathbf Y_{s}  \big)=0\;.$ Hence, $\Var(\mathbf Y_{s}  )$ is not strictly positive definite, but there is a vector which achieves equality. 
\end{remark}



\section{Covariance Estimation}\label{sec:cov}
From the previous theorem, we know that. 
\[
S_n=\sum_{v_s\in\mathcal V_n}\Big(\mathbf Y_{s}-\E[\mathbf Y_{s}  ]\Big),
\qquad
\Sigma_n=\Var(S_n  ).
\]
In practical applications, $\Sigma_n$ is unknown and it must be estimated. 
By denoting with \(\widetilde{\mathbf Y}_{s}:=\mathbf Y_{s}-\E[\mathbf Y_{s}  ]\), the covariance expands as 
\begin{align*}
    \Sigma_n
=&\Var\!\left(\sum_{v_s\in\mathcal V_n}\widetilde{\mathbf Y} _{s}\right)\\
=&\sum_{v_s\in\mathcal V_n}\sum_{v_t\in\mathcal V_n}
\Cov\!\big(\widetilde{\mathbf Y} _{s},\widetilde{\mathbf Y} _{t}  \big)\;.
\end{align*}
By regrouping the pairs \((v_s,v_t)\) according to their graph distance \(d_n(v_s,v_t)=h\) yields
\begin{align*}
    \Sigma_n
=&\sum_{h\ge 0}\ \sum_{v_s\in\mathcal V_n}\ \sum_{v_t\in\mathcal V_n^{\partial}(v_s;h)}
\Cov\!\big(\widetilde{\mathbf Y} _{s},\widetilde{\mathbf Y} _{t}  \big)\\
=&\sum_{h\ge 0} \Omega_n(h)\;,
\end{align*}
where we set 
\begin{align}
\label{eq:Omega}
     \Omega_n(h)=\sum_{v_s\in\mathcal V_n}\ \sum_{v_t\in\mathcal V_n^{\partial}(u;h)}
\Cov\!\big(\widetilde{\mathbf Y} _{s},\widetilde{\mathbf Y} _{t}  \big)\;.
\end{align}
Since the conditional means \(\E[\mathbf Y_{s}  ]\) are unknown, we replace
\(\widetilde{\mathbf Y} _{s}=\mathbf Y _{s}-\E[\mathbf Y _{s}  ]\) by its empirical counterpart
\(\mathbf Y _{s}-\bar{\mathbf Y}_n\), where \(\bar{\mathbf Y}_n:=n^{-1}\sum_{v_s\in\mathcal V_n}\mathbf Y _{s}\).
For each distance \(h\ge 0\), we then define the (unnormalized) empirical shell contribution
\begin{equation}\label{eq:Omega_hat_raw}
\widehat\Omega_n(h)
:=
\sum_{v_s\in\mathcal V_n}\ \sum_{v_t\in\mathcal V_n^{\partial}(v_s;h)}
\big(\mathbf Y _{s}-\bar{\mathbf Y}_n\big)\big(\mathbf Y _{t}-\bar{\mathbf Y}_n\big)^\top.
\end{equation}
This matrix is the sample analogue of \(\Omega_n(h)\) in \eqref{eq:Omega}. In particular, in view of the
decomposition \(\Sigma_n=\sum_{h\ge 0}\Omega_n(h)\), we estimate \(\Sigma_n\) by the kernel-truncated sum
\begin{equation}\label{eq:Sigma_hat_raw}
\widehat\Sigma_n
:=
\sum_{h\ge 0}\omega\!\left(\frac{h}{b_n}\right)\widehat\Omega_n(h),
\end{equation}
where \(b_n\to\infty\) is a bandwidth and \(\omega:[0,\infty)\to\mathbb R\) is a kernel with compact support
(e.g.\ the Bartlett kernel \(\omega(x)=(1-|x|)\mathbf 1_{[0,1]}(x)\)).

Finally, note that \(\widehat\Sigma_n\) estimates the covariance of the \emph{unnormalized} sum \(S_n\), and thus
typically diverges in norm as \(n\to\infty\). If one instead wishes to estimate the covariance matrix of the
normalized sum \(n^{-1/2}S_n\), which converges to a finite (non-degenerate) limit, one rescales by \(n^{-1}\) and
sets
\[
\widehat V_n:=\frac{1}{n}\widehat\Sigma_n,
\quad\text{and}\quad
\widehat V_n \approx \Var\!\left(n^{-1/2}S_n  \right)=\frac{1}{n}\Sigma_n.
\]
A sufficient condition to ensure consistency of $\hat{V}_n$  is to impose the following assumption (see Assumption 4.1 in \cite{KOJEVNIKOV2021882})

\begin{assumption} There exists $p>4$ such that 
    \begin{itemize}
    \item[(i)] 
    $\lim\limits_{n\to \infty} \sum_{h\geq 1}|\omega_n(h)-1|\delta_n^{\partial}(h)\alpha^{1-\frac{2}{p}}_n(h)=0 \quad\text{a.s.}$
    \item [(ii)] 
    $ \lim\limits_{n\to \infty} n^{-1} \sum_{h\geq 0}c_n(h, b_n,2) \alpha^{1-\frac{4}{p}}_n(h)=0\quad \text{a.s.}$
    \end{itemize}
\end{assumption}

\section{Test statistics}\label{sec:test_statistics}

By definition, the vector $\mathbf{p} = \big(p(\pi_1),\ldots,p(\pi_{m!})\big)^\top$
represents a point in the $m!$-dimensional unit simplex
\begin{equation}
\mathcal{S} =\left\{\mathbf{x} = (x_1,\ldots,x_{m!}) \in \mathbb{R}^{m!}:x_i \ge 0,\;\sum_{i=1}^{m!} x_i = 1\right\}.
\end{equation}

This simplex constitutes the sample space of \emph{compositional data}, that is, multivariate observations describing parts of a whole. 
Such data contain only relative information and are subject to the constraints of non-negativity and constant sum. 
As a consequence, classical statistical methods developed for unconstrained data in ordinary Euclidean space are generally not suitable for compositional data. 
Ignoring the particular geometry of the simplex may lead to misleading or spurious conclusions. For a detailed discussion of these issues, see 
the seminal work of \cite{aitchison1982statistical} and subsequent treatments of compositional data analysis such as \cite{pawlowsky2015modeling}, \cite{greenacre2018compositional}, and \cite{filzmoser2021advances}.

A principled way to address this issue is to analyze log-ratios between the components of the composition. 
Log-ratio transformations establish a one-to-one correspondence between the simplex and a real Euclidean space defined by a 
suitable coordinate representation of the original data. This transformation makes it possible to apply standard statistical techniques 
in the transformed space while preserving the relevant information contained in the composition. In addition, log-ratio methods 
possess desirable invariance properties, such as independence from the overall scale of the data (for example, whether the components 
sum to 1 or to 100) and from the number of parts forming the composition.

Accordingly, the test for spatial independence is performed not on the probabilities of the ordinal patterns themselves, but on 
their log-ratio representation. In particular, the additive log-ratio (ALR) coordinate vector consists of $m!-1$ coordinates obtained from the probabilities of the ordinal patterns $\mathbf{p}$ as follows:
\begin{equation}
\label{eq:alr}
\mathrm{ALR}(\mathbf{p}) =
\left(
\log\!\left(\frac{p(\pi_1)}{p(\pi_{m!})}\right),
\ldots,
\log\!\left(\frac{p(\pi_{m!-1})}{p(\pi_{m!})}\right)
\right)^\top.
\end{equation}

The ALR representation requires selecting one component of the composition as a reference and placing it in 
the denominator of the log-ratios. Without loss of generality, we take the frequency of the last symbol as the reference component. 
Choosing a different denominator simply corresponds to permuting the parts of the composition, which mathematically results in a linear 
transformation between the corresponding alr coordinate vectors.

Let us denote by  $\mathbf 1\in\mathbb R^{m!-1}$  the vector of ones. Under spatial independence, i.e., under $\mathcal H_0$, the ordinal-pattern
distribution is uniform, 
so that $\mathrm{ALR}(\frac{1}{m!}\mathbf 1)=\mathbf 0$.
We therefore consider the plug-in statistic
\begin{equation}\label{eq:alr_est}
\mathrm{ALR}(\hat{\mathbf p}_n)
=
\Bigg(
\log\!\Big(\tfrac{\hat p_n(\pi_1)}{\hat p_n(\pi_{m!})}\Big),\ldots,
\log\!\Big(\tfrac{\hat p_n(\pi_{m!-1})}{\hat p_n(\pi_{m!})}\Big)
\Bigg)^\top\;.
\end{equation}
Equivalently,
\begin{align*}
\mathrm{ALR}(\hat{\mathbf p}_n)
=
\left[\begin{array}{c}
     \log \hat p_n(\pi_1) \\
     \vdots
     \\
     \log \hat p_n(\pi_{m!-1})
\end{array}\right]
-
\log \hat p_n(\pi_{m!})\,\mathbf 1 .
\end{align*}

Next, we derive its asymptotic distribution under $\mathcal H_0$. Under the assumptions of Theorem~\ref{thm:cond_vector_clt}, we have under $\mathcal H_0$
\begin{equation}
\label{eq:clt_phat}
\sqrt n\left(
\left[\begin{array}{c}
     \widehat p_n(\pi_1)\\
     \vdots\\
     \widehat p_n(\pi_{m!-1})
\end{array}\right]
-\tfrac{1}{m!}\mathbf 1
\right)
\ \xrightarrow{\mathcal D}\
\mathcal N(\mathbf 0,\,V)\;,
\end{equation}
where  $V\in\mathbb R^{m!-1\times m!-1}$ is the asymptotic covariance matrix.
Moreover, we reinsert the last coordinate by
\[
\widehat p_{n}(\pi_{m!}):=1-\sum_{j=1}^{m!-1}\widehat p_n(\pi_j),
\quad
p(\pi_{m!}):=\tfrac{1}{m!}.
\]

Consider the map $\mathrm{ALR} :\mathbb R^{m!-1}\to\mathbb R^{m!-1}$ given by
\begin{equation}
\label{eq:alr_map}
\mathrm{ALR}(\mathbf r)
:=
\left[\begin{array}{c}
\log\!\big(r_1/g(\mathbf r)\big)\\
\vdots\\
\log\!\big(r_{m!-1}/g(\mathbf r)\big)
\end{array}\right],
\quad
g(\mathbf r):=1-\sum_{j=1}^{m!-1}r_j\;.
\end{equation}

The Jacobian matrix $D\left[\mathrm{ALR}\right](\mathbf r)\in\mathbb R^{m!-1\times m!-1}$ has entries
\begin{align*}
&\frac{\partial}{\partial r_j}\Big(\log r_i-\log \left(g(\mathbf r)\right)\Big)
=
\frac{\delta_{ij}}{r_i}+\frac{1}{g(\mathbf r)},\\
&i,j\in\{1,\ldots,m!-1\},
\end{align*}
where $\delta_{ij}$ denotes the Kronecker delta. Evaluating at $\tfrac{1}{m!}\mathbf 1$ yields
\[
D\left[ \mathrm{ALR}\right](\tfrac{1}{m!}\mathbf 1)_{ij}
=
m!(\delta_{ij}+1),\]
hence,
\[
J:=D\left[\mathrm{ALR}\right](\tfrac{1}{m!}\mathbf 1)
=
m!\Big(I_{m!-1}+\mathbf 1\mathbf 1^\top\Big)\;.
\]
Therefore, by the multivariate delta method applied to~\eqref{eq:clt_phat} and the map~\eqref{eq:alr_map},
\begin{equation}
\label{eq:clt_alr}
\sqrt n\Big(\mathrm{ALR}(\widehat{\mathbf p}_n)-\mathrm{ALR}(\tfrac{1}{m!}\mathbf 1)\Big)
\ \xrightarrow{\mathcal D}\
\mathcal N\!\Big(\mathbf 0,\ J V J^\top\Big),
\end{equation}
In particular, under $\mathcal H_0$ the ALR statistic is asymptotically centered and has covariance $J V J^\top$.

Let $\widehat V_n$ be a consistent estimator of $V$ under $\mathcal H_0$. Then, the Mahalanobis distance 
\begin{equation}
\label{eq:wald_alr}
L_n
:=
n\cdot
\mathrm{ALR}(\widehat{\bm p}_n)^\top
\Big(J\,\widehat V_n\,J^\top\Big)^{-1}
\mathrm{ALR}(\widehat{\bm p}_n)\;
\end{equation}
follows a chi-squared distribution with $m!-1$ degrees of freedom, i.e. under $\mathcal H_0$,
\begin{equation}
\label{eq:convergence_L_n}
    L_n \ \xrightarrow{\mathcal D}\ \chi^2_{m!-1}\;,
\end{equation}
see \cite{manly2024multivariate}.

We will reject $\mathcal H_0$ at a confidence level $\alpha$ whenever $L_n\ge q_{1-\alpha}$, where $q_{1-\alpha}$ denotes the $(1-\alpha)$-quantile of the $\chi^2_{m!-1}$ distribution. The estimator $\hat{V}_n$ is constructed via \eqref{eq:Sigma_hat_raw} by replacing $\bar{\mathbf{Y}}_u=\frac{1}{m!}\mathbf{1}$ for all $u\in \mathcal{V}_n\;,$ and normalizing by $1/n\;.$

\section{Monte Carlo Simulations}\label{sec:monte}
\paragraph*{Size of the test.}
We first assess the finite-sample behaviour of the test statistic $L_n$ under
$\mathcal{H}_0$.  By~\eqref{eq:convergence_L_n}, $L_n \xrightarrow{d} \chi^2_{m!-1}$
 and we investigate how quickly this asymptotic approximation
becomes accurate.  We generate i.i.d.\ spatial observations for sample sizes
$n \in \{500, 2000, 5000\}$ and compute $L_n$ over $R = 10 000$ Monte Carlo
replications.  Throughout, the ordinal-pattern dimension is fixed at $m = 3$,
so the limiting null distribution is $\chi^2_5$, which has theoretical mean~5,
variance~10, and median approximately~4.35.

Table~\ref{tab:qstat_summary} reports the empirical mean, variance, and median of $L_n$
for each sample size.  Across all values of $n$, the three summary statistics
are in close agreement with their theoretical counterparts under $\chi^2_5$
(mean~5, variance~10, median~4.35), and the agreement improves monotonically
with $n$.  Notably, even at the smallest sample size considered ($n = 500$),
the empirical mean (4.84) and median (4.31) already match the theoretical
values to within 3\%, demonstrating that the chi-squared centering is well
captured in moderate samples.  The empirical variance converges steadily from
8.32 to 9.65 as $n$ grows from 500 to 5000, reflecting the expected
improvement in the asymptotic approximation.
 
The QQ-plots and histograms in Figure~\ref{fig:under_null} provide a complementary
visual assessment.  At every sample size, the bulk of the empirical
distribution aligns closely with the $\chi^2_5$ law, and the agreement
improves rapidly with $n$: by $n = 2 000$ it is visually excellent across
the entire support, and at $n = 5 000$ the empirical distribution is
essentially indistinguishable from the asymptotic limit.  Taken together,
these results confirm that the $\chi^2_{m!-1}$ approximation is already
reliable at the moderate sample sizes that are typical in spatial
applications, and becomes sharper as $n$ increases.

\begin{table}[ht]
\centering
\large
\begin{tabular}{ccccc}
\hline
$n$ & $R$ & mean$(L_n)$ & var$(L_n)$ & median$(L_n)$ \\
\hline
500  & 10000 & 4.8412 & 8.3185 & 4.3118 \\
2000 & 10000 & 4.9156 & 9.2907 & 4.2753 \\
5000 & 10000 & 4.9316 & 9.6481 & 4.3143 \\
\hline
\end{tabular}
\caption{Empirical mean, variance, and median of $L_n$ under $\mathcal H_0$, based on $R=10 000$ Monte Carlo replications.}
\label{tab:qstat_summary}
\end{table}

\begin{center}
\begin{figure*}
    \centering
    \includegraphics[width=1\linewidth]{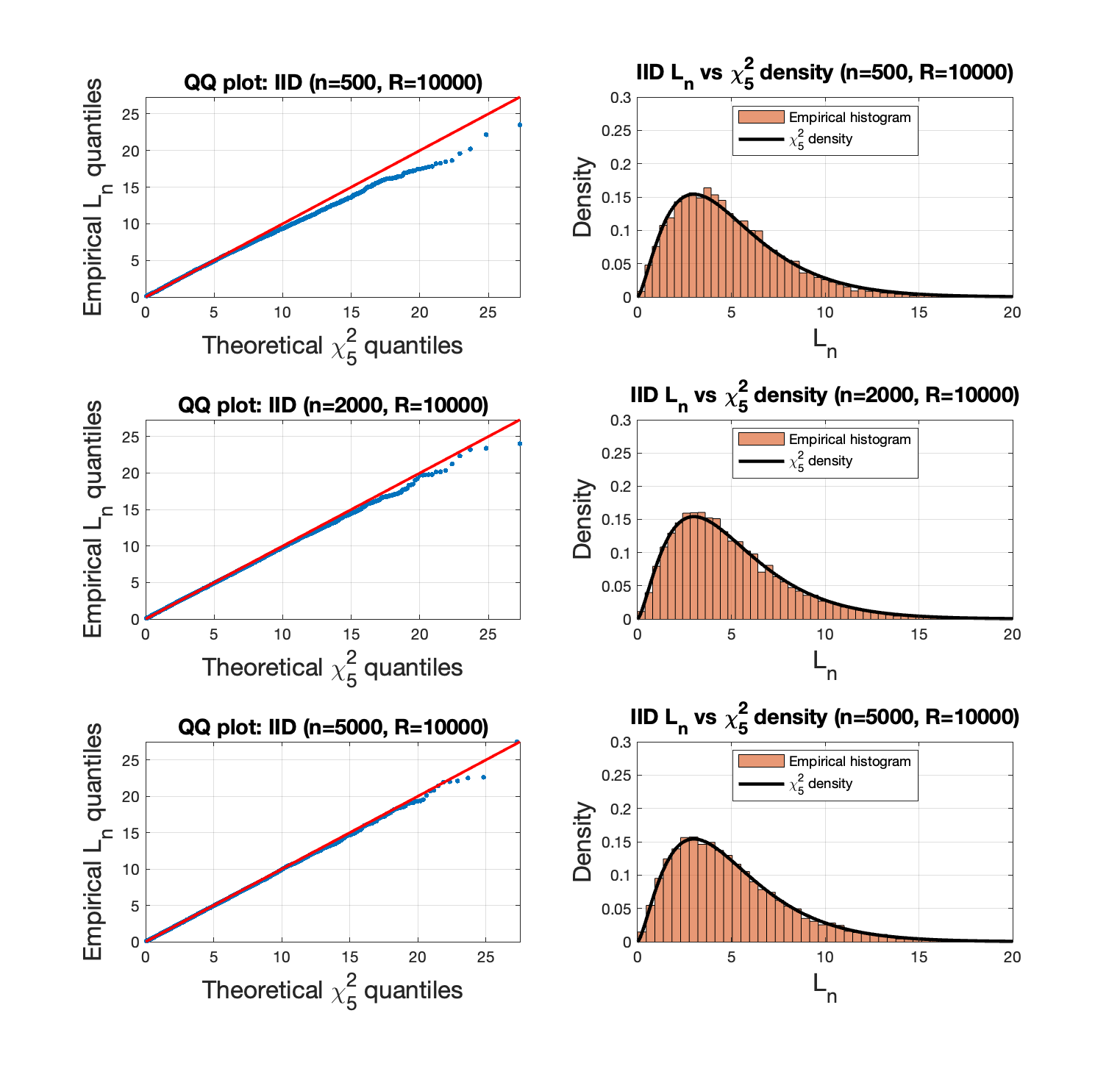}
    \caption{ QQ plot for different sample sizes (left). 
    Empirical density function versus theoretical density (right). Plots are obtained from $R=10000$ Monte Carlo replications. 
    }
    \label{fig:under_null}
\end{figure*}
\end{center}
\paragraph*{Power of the test. }
We assess the finite-sample power of the proposed test under the spatial
autoregressive (SAR) model of the previous example.  For each sample size
$n \in \{500, 2000, 5000\}$ we construct a nearest-neighbour
graph on the observed spatial locations, with number of neighbors $2$ or $3$.
The spatial weight matrix $W$ is the associated row-normalised adjacency
matrix.  For a dependence parameter $\rho \in \{0, 0.1, \ldots, 0.9\}$ we
generate observations from the reduced-form model
\[
  \mathbf{X} = (I_n - \rho W)^{-1}\boldsymbol{\varepsilon},
  \qquad
  \boldsymbol{\varepsilon} \sim \mathcal{N}(0, I_n).
\]
In order to examine the robustness of the test to departures from Gaussianity
and linearity, we also consider two nonlinear transformations of $\mathbf{X}$:
the sine-transformed model $\mathbf{X}' = \sin(\mathbf{X})$ and the
log-absolute-value model $\mathbf{X}' = \log|\mathbf{X}|$.  For each
configuration $(n, \rho)$ we run $R = 10 000$ replications, fix the ordinal
pattern length at $m \in \{3, 4\}$, compute the ALR-based statistic $L_n$, and
reject $\mathcal{H}_0$ at the 5\% nominal level using the $\chi^2_{m!-1}$
critical value.
 
Figure~\ref{fig:power_all_models} displays the empirical power curves as a function of
$\rho$ for all combinations of model, $m$, and $n$.  Several conclusions
emerge.
 
\paragraph*{Size control.}
At $\rho = 0$, which corresponds to spatial independence, all configurations
yield rejection rates close to the nominal 5\% level, confirming the good
finite-sample size calibration.
 
\paragraph*{Monotone power.}
The empirical rejection rate increases monotonically with $\rho$ in every
scenario, as expected for a consistent test: stronger spatial dependence
produces larger deviations of the ordinal-pattern distribution from
uniformity, which the statistic successfully detects.
 
\paragraph*{Effect of sample size.}
The influence of $n$ is substantial, particularly at intermediate values of
$\rho$.  For the linear SAR model with $m = 3$, the test achieves power
close to 1 for $\rho \geq 0.5$ when $n = 5 000$, while at $n = 500$ a
comparable power level requires $\rho \approx 0.8$.  This gap narrows as $n$
increases, illustrating the consistency of the test.
 
\paragraph*{Robustness to nonlinear transformations.}
A key feature of ordinal-pattern methods is their invariance under strictly
monotone transformations: if $f$ is monotone increasing, the ordinal pattern
of $(X_{u_0}, \ldots, X_{u_{m-1}})$ coincides with that of
$(f(X_{u_0}), \ldots, f(X_{u_{m-1}}))$.  Since $\sin$ and $\log|\cdot|$ are
not globally monotone, spatial dependence is distorted in a nonlinear fashion,
which a priori could reduce power.  Nevertheless, Figure~\ref{fig:power_all_models} shows
that the power curves under both transformed models are qualitatively similar
to those of the linear SAR, with only modest reductions in power at
intermediate $\rho$.  This finding illustrates the robustness of the test to
nonlinear data-generating mechanisms, a practically important property that
linear spatial-dependence tests do not in general possess.
 
\paragraph*{Effect of ordinal-pattern dimension.}
Comparing $m = 3$ and $m = 4$, we observe that increasing $m$ tends to reduce
power for smaller $n$.  This is a consequence of the larger
number of degrees of freedom in the limiting distribution: under $m = 4$, the
$\chi^2_{m!-1} = \chi^2_{23}$ critical value is substantially higher than the
$\chi^2_5$ value used for $m = 3$, so a larger signal is required to exceed
it.  At large $n$ (e.g., $n = 5 000$) the power for $m = 4$ is larger than for $m = 3$, suggesting that the additional
information captured by longer patterns eventually compensates for the increased degrees of freedom.  In practice, we recommend $m = 3$ for moderate
sample sizes and $m = 4$ when $n$ is large.

\begin{figure*}[!t]
\centering

\begin{subfigure}[t]{0.49\textwidth}
    \centering
    \includegraphics[width=\linewidth]{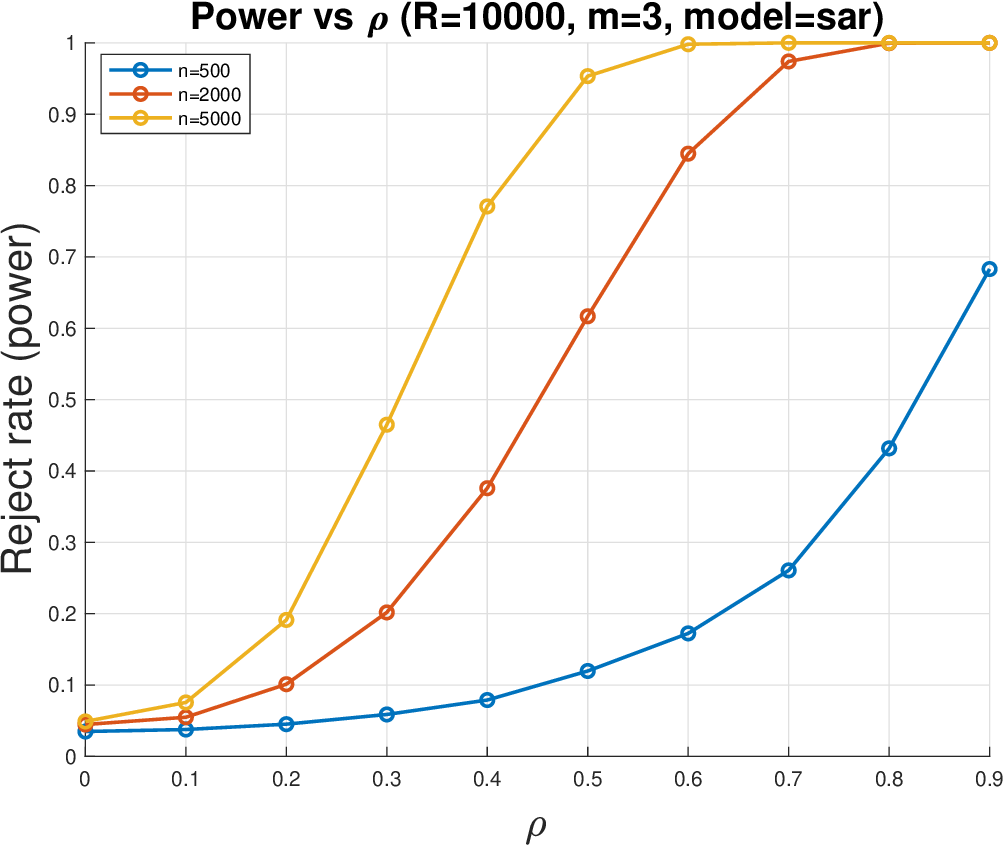}
    \caption{Linear SAR, $m=3$.}
    \label{fig:power_linear_m3}
\end{subfigure}\hfill
\begin{subfigure}[t]{0.49\textwidth}
    \centering
    \includegraphics[width=\linewidth]{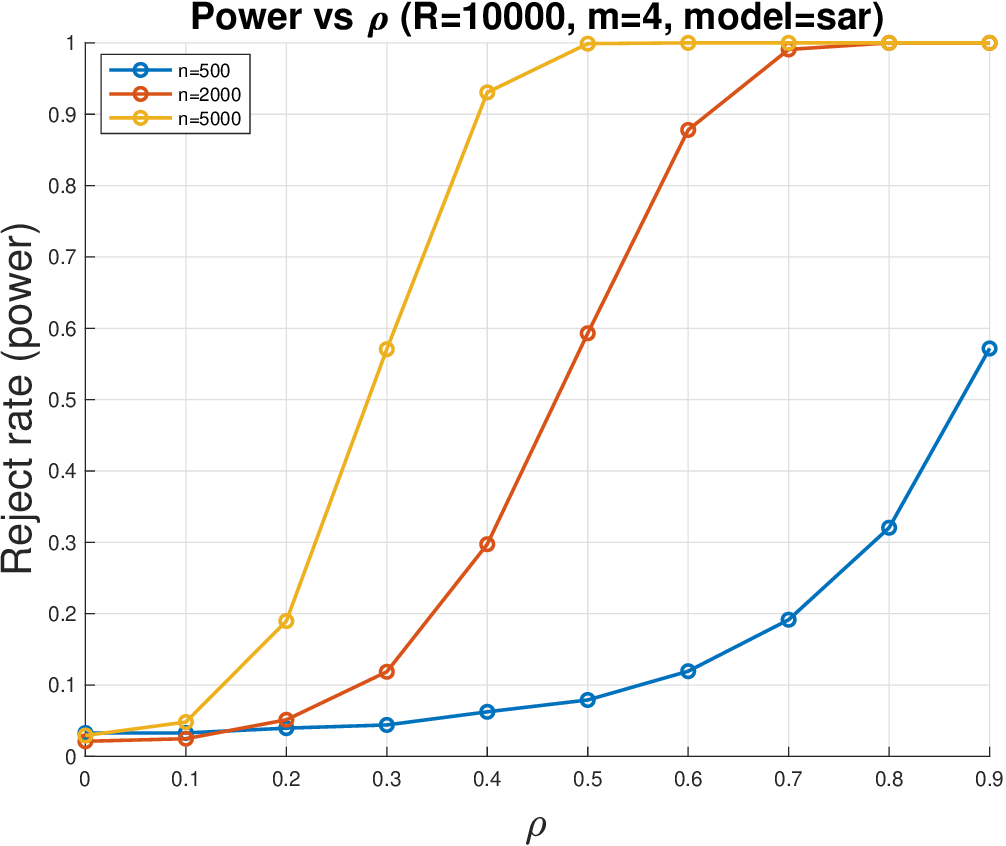}
    \caption{Linear SAR, $m=4$.}
    \label{fig:power_linear_m4}
\end{subfigure}

\vspace{0.35cm}

\begin{subfigure}[t]{0.49\textwidth}
    \centering
    \includegraphics[width=\linewidth]{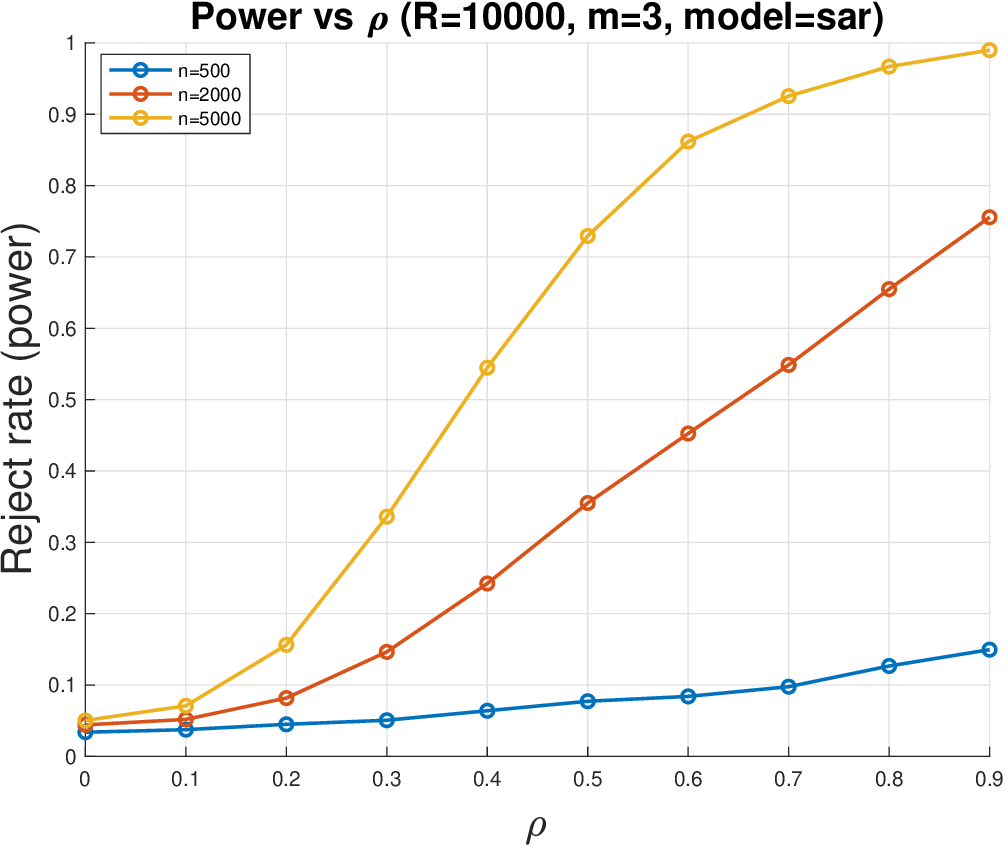}
    \caption{$\sin$-transformed SAR, $m=3$.}
    \label{fig:power_sin_m3}
\end{subfigure}\hfill
\begin{subfigure}[t]{0.49\textwidth}
    \centering
    \includegraphics[width=\linewidth]{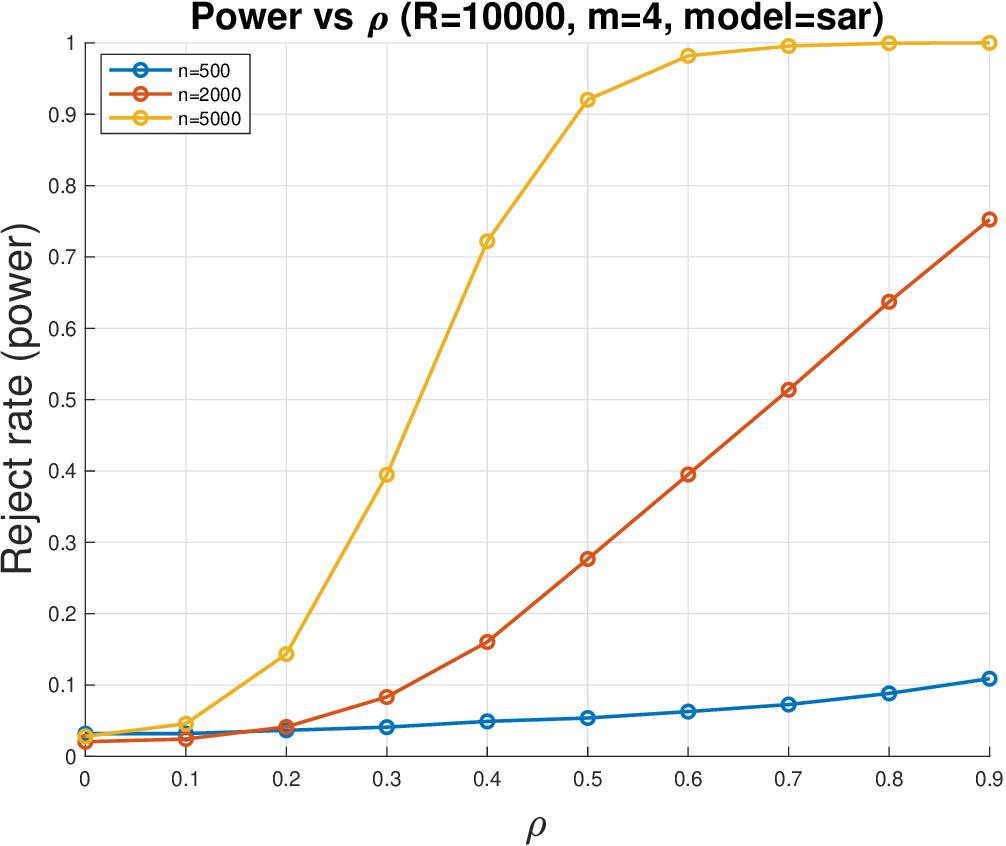}
    \caption{$\sin$-transformed SAR, $m=4$.}
    \label{fig:power_sin_m4}
\end{subfigure}

\vspace{0.35cm}

\begin{subfigure}[t]{0.49\textwidth}
    \centering
    \includegraphics[width=\linewidth]{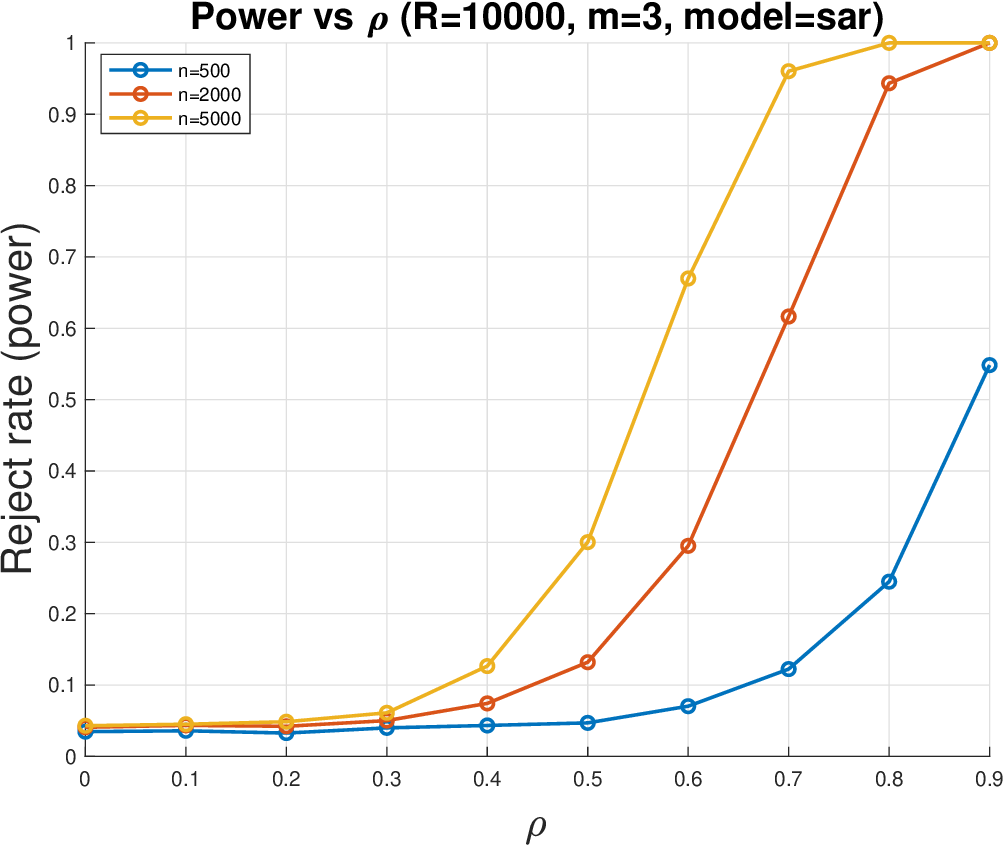}
    \caption{$\log|\cdot|$-transformed SAR, $m=3$.}
    \label{fig:power_logabs_m3}
\end{subfigure}\hfill
\begin{subfigure}[t]{0.49\textwidth}
    \centering
    \includegraphics[width=\linewidth]{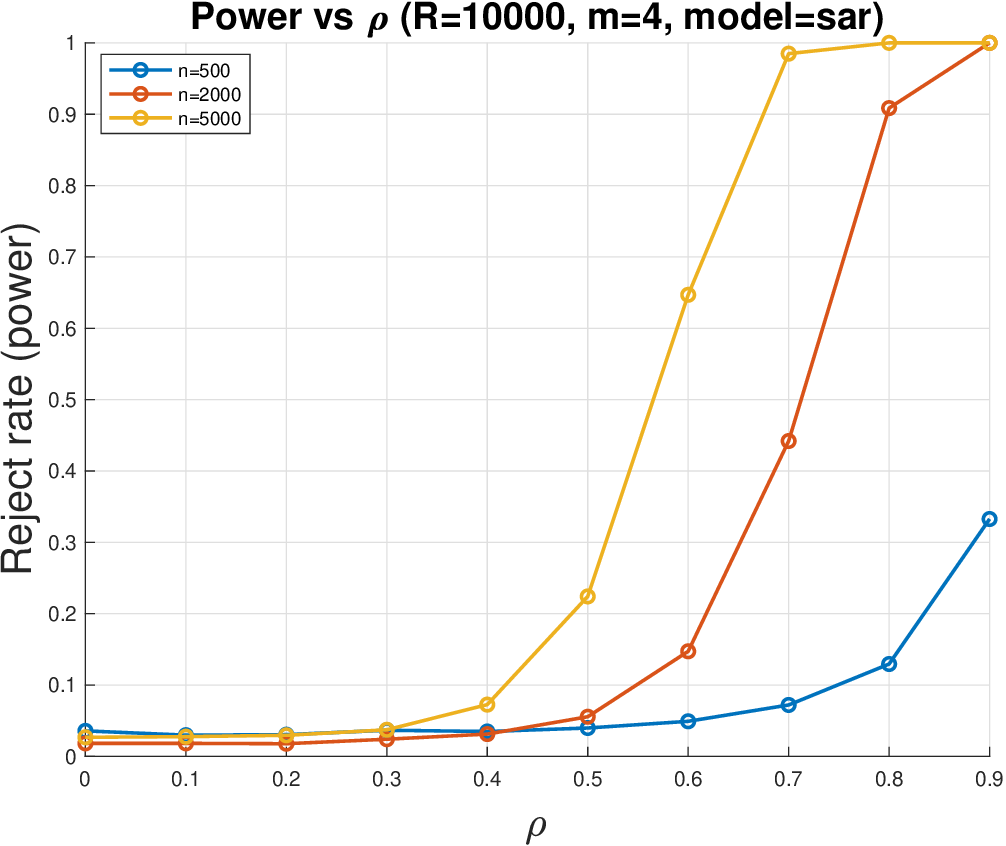}
    \caption{$\log|\cdot|$-transformed SAR, $m=4$.}
    \label{fig:power_logabs_m4}
\end{subfigure}

\caption{Empirical power curves based on $R=10 000$ Monte Carlo repetitions for ordinal pattern dimensions $m\in\{3,4\}$ under three data-generating mechanisms: the linear SAR model $X=(I_n-\rho W)^{-1}\varepsilon$, the nonlinear model $X=\sin((I_n-\rho W)^{-1}\varepsilon)$, and the nonlinear model $X=\log |(I_n-\rho W)^{-1}\varepsilon|$. }
\label{fig:power_all_models}
\end{figure*}

\section{Conclusions}
\label{sec:conclusions}

This paper has introduced the first ordinal-pattern-based test of spatial
independence for data observed on irregular, non-lattice point clouds
$\mathcal{V}_{n}\subset\mathbb{R}^{2}$.
The central methodological step is to replace the rectangular $2\times2$
lattice windows used by SOP-based procedures with local nearest-neighbour
configurations: for each location $v_s\in\mathcal{V}_{n}$, the ordinal
pattern $\Pi(X_{B(s)})\in\mathcal{S}_{m}$ encodes only the relative
order of the $m$ observations in the neighborhood block $B(s)$, retaining
no information about absolute values or inter-location distances.

Under the null hypothesis of spatial independence, the ordinal patterns are
i.i.d.\ and uniformly distributed over the symmetric group $\mathcal{S}_{m}$,
a distributional characterization that holds for any continuous marginal $F$
and is entirely free of parametric assumptions.
We exploit this fact to construct the Wald test statistic $L_{n}$ based on
the additive log-ratio transformation of the empirical ordinal-pattern
frequencies.
Under the mixing and graph-sparsity conditions of Theorem~\ref{thm:cond_vector_clt},
$L_{n}$ converges in distribution to a $\chi^{2}_{m!-1}$ random variable,
providing a simple, asymptotically pivotal procedure with no tuning parameters
beyond the embedding dimension~$m$.

The finite-sample properties of the test were assessed through an extensive
Monte Carlo study with $R=10 000$ replications.
Under the null, the empirical distribution of $L_{n}$ converges rapidly to the
$\chi^{2}_{m!-1}$ limit: for $m=3$ (so that the reference distribution is
$\chi^{2}_{5}$, with theoretical mean $5$, variance $10$, and median $\approx
4.35$), the empirical moments are already close to their theoretical values at
$n=500$ and indistinguishable from them at $n=5000$.
Under the alternative, the test achieves correct size control at $\rho=0$ and
exhibits monotone power that increases with both the dependence parameter
$\rho$ and the sample size $n$.
Crucially, the ordinal-pattern statistic detects dependence not only in the
linear Gaussian SAR model $X=(I_n-\rho W)^{-1}\varepsilon$, but also in the
nonlinear transformations $X=\sin\bigl((I_n-\rho W)^{-1}\varepsilon\bigr)$ and
$X=\log\bigl|(I_n-\rho W)^{-1}\varepsilon\bigr|$, for both $m=3$ and $m=4$.
This robustness to monotone transformations is an intrinsic feature of
ordinal-pattern methods and constitutes a clear advantage over classical
tests based on linear spatial autocorrelation.

Several directions for future research are suggested by this work.
First, the asymptotic theory developed here relies on the graph-based
$\alpha$-mixing framework of~\citet{KOJEVNIKOV2021882}; extending the
validity of $L_n$ to broader classes of spatial dependence, including
long-range or non-mixing processes, would widen the scope of the method.
Second, the choice of the embedding dimension $m$ involves a bias--variance
trade-off: larger $m$ provides finer discrimination between ordinal patterns
and potentially higher power, but at the cost of a larger number of
degrees of freedom and a slower convergence of $\hat{p}_n$ to its limit.
Data-driven selection of $m$, analogous to lag-order selection in time
series, is an open problem worthy of investigation.
Third, it would be natural to consider marked or multivariate extensions of
the test, in which each location carries a vector of observations rather than
a scalar, or to incorporate the spatial configuration of the locations
themselves as additional information.
Finally, applications to real spatial datasets -- such as environmental
monitoring networks, epidemiological surveillance data, or urban economic
data -- would demonstrate the practical value of the proposed procedure
in settings where irregular spatial supports are the rule rather than
the exception.

\section*{Data Availability Statement}

The data that support the findings of this study are available from the corresponding author upon reasonable request.

\section*{Acknowledgments }
Manuel Ruiz Mar\'{\i}n is grateful to the support of a grant from the Spanish Ministry of Science, Innovation and Universities PID2022-136252NB-I00 founded by MICIU/AEI/10.13039/501100011033 and by the European Regional Development Fund (FEDER, EU). \\
Mariano Matilla‑García is grateful to the support of a grant from the Spanish Ministry of Science, Innovation and Universities PID2022-136547NB-I00, founded by MICIU/AEI/10.13039/501100011033 and by European Regional Development Fund (FEDER, EU).

\vspace{1cm}
The following article has been submitted to \emph{Chaos: An Interdisciplinary Journal of Nonlinear Science}. After it is published, it will be found at \url{https://publishing.aip.org/publications/journals/special-topics/chaos/classical-ordinal-patterns-and-beyond/}
\nocite{*}
\bibliographystyle{plainnat}
\bibliography{aipsamp} 
\newpage
\appendix

\section{Proof of Theorem 1}

\label{appendix:proofs}

\begin{lemma}\label{lem:alpha_mixing_OP_local}
Let $\mathcal V_n$ be a set of spatial locations, and let 
$d_n(\cdot,\cdot)$ be the corresponding graph-distance. 
Let $\{X_{s}\}_{v_s\in\mathcal V_n}$ be a sequence of random variables on a common probability space and define the corresponding ordinal-pattern indicator
\[
Y_{s}:=\mathbf 1\!\left\{\Pi\big(X_{B(s)}\big)=\pi\right\},
\qquad v_s\in\mathcal V_n,
\]
for some fixed $\pi\in \mathcal S_m$. 
Denote by $\alpha^X(h)$ and $\alpha^Y(h)$ the spatial mixing coefficients of $X$ and $Y$ respectively.

Then the following inequality holds
\[
\alpha^Y_{n}(h)\ \le\ \alpha^X_{n}\big(\max\{h-2r_n,\,0\}\big),
\qquad h\ge 0.
\]
where $r_n=\max_{v_s\in \mathcal{V}_n}\max_{v_u\in B(s)}d_n(v_s,v_u)$.

In particular, if $\alpha^X_{n}(h)\to 0$ as $h\to\infty$ 
and $\sup_n r_n<\infty$, then $\alpha^Y_{n}(h)\to 0$ as $h\to\infty$, i.e.\ $(Y_{s})_{v_s\in \mathcal{V}_n}$ is strongly mixing. 
\end{lemma}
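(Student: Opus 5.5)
The proof is a $\sigma$-field inclusion argument followed by a triangle-inequality bound on graph distances.

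\textbf{Step 1: enlarge the index sets.} Fix $h\ge 0$ and sets $A,B\subset\mathcal V_n$ with $\mathrm{dist}(A,B)\ge h$. Define the enlarged sets
\[
A^{+}:=\bigcup_{v_s\in A}B(s),\qquad B^{+}:=\bigcup_{v_t\in B}B(t).
\]
Each $Y_s$ is a measurable function of the block $X_{B(s)}$, since the ordinal pattern is a Borel map of a vector in $\mathbb R^m$. Hence
\[
\sigma(Y_s:v_s\in A)\subset\sigma(X_u:v_u\in A^{+})=\mathcal F^X_{A^{+}},
\]
and likewise $\sigma(Y_t:v_t\in B)\subset\mathcal F^X_{B^{+}}$. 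So any events $E\in\mathcal F^Y_A$ and $F\in\mathcal F^Y_B$ also lie in $\mathcal F^X_{A^+}$ and $\mathcal F^X_{B^+}$ respectively.

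\textbf{Step 2: distance between the enlarged sets.} Take $v_u\in A^+$ and $v_w\in B^+$, say $v_u\in B(s)$ with $v_s\in A$ and $v_w\in B(t)$ with $v_t\in B$. By the triangle inequality for the graph metric $d_n$,
\[
d_n(v_u,v_w)\ge d_n(v_s,v_t)-d_n(v_s,v_u)-d_n(v_t,v_w)\ge h-2r_n.
\]
Since graph distances are nonnegative, $\mathrm{dist}(A^+,B^+)\ge\max\{h-2r_n,0\}$.

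\textbf{Step 3: compare the suprema.} For such $E$ and $F$, Steps 1 and 2 give
\[
|\mathbb P(E\cap F)-\mathbb P(E)\mathbb P(F)|\le\alpha^X_n(\max\{h-2r_n,0\}),
\]
because the pair $(A^+,B^+)$ is admissible in the supremum defining $\alpha_n^X$ at that distance. Taking the supremum over all admissible $A,B,E,F$ yields $\alpha^Y_n(h)\le\alpha^X_n(\max\{h-2r_n,0\})$.

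One bookkeeping point: the supremum in Definition~\ref{def:cond_alpha_mixing} ranges over all subsets, so the enlargement from $A$ to $A^+$ is harmless. If one instead uses the size-restricted version appearing in the SAR example, the cardinalities grow by at most a factor $m$, since $|A^+|\le m|A|$; this should be remarked upon.

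\textbf{Step 4: the limit statement.} Let $R:=\sup_n r_n<\infty$. Then $r_n\le R$ for every $n$. Because $\alpha^X_n$ is nonincreasing in $h$ (the admissible class shrinks as $h$ grows), we get $\alpha^Y_n(h)\le\alpha^X_n(h-2R)$ for $h\ge 2R$, and the right-hand side tends to $0$ as $h\to\infty$.

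\textbf{Main obstacle.} The only delicate step is Step 2. It requires that $B(s)$ lies within graph distance $r_n$ of $v_s$ in the same graph $G_n$ that defines the mixing coefficients. Here this is immediate: every $v_u\in B(s)\setminus\{v_s\}$ is joined to $v_s$ by an edge of $G_n$, so in fact $r_n\le 1$. Everything else is measurability bookkeeping.
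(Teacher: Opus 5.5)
Your proposal is correct and follows the paper's proof essentially step for step: you enlarge $A,B$ to the unions of their blocks $B(s)$, use $\sigma(Y_A)\subset\sigma(X_{A^+})$, and bound $\mathrm{dist}(A^+,B^+)\ge h-2r_n$ by the triangle inequality. Your additional remarks are valid refinements: monotonicity of $\alpha^X_n$ in $h$ lets you pass from $r_n$ to $R=\sup_n r_n$, the construction of $G_n$ gives $r_n\le 1$, and $|A^+|\le m|A|$ handles the size-restricted coefficients; you also state the final inequality with the correct argument $\max\{h-2r_n,0\}$, where the paper's last line slips to $\alpha^X_n(h)$.
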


\begin{proof}
Fix $n\ge 1$ and subsets $A,B\subset\mathcal V_n$.
Set
\[
U_n(A):=\bigcup_{v_s\in A} B(s),
\qquad
U_n(B):=\bigcup_{v_t\in B} B(t).
\]
Since each $Y_{s}$ is a measurable function of $(X_{s})_{v_s\in B(s)}$, we have
\[
\sigma(Y_{A})\subset \sigma(X_{U_n(A)}),
\qquad
\sigma(Y_{B})\subset \sigma(X_{U_n(B)}).
\]
and therefore,

\begin{equation}
\label{eq:alpha_monotone}
\sup_{\substack{G\in \sigma(Y_A)\\ H\in \sigma(Y_B)}}
\big|\Cov(\mathbf 1_G,\mathbf 1_H)\big|
\;\le\;
\sup_{\substack{G\in \sigma(X_{U_n(A)})\\ H\in \sigma(X_{U_n(B)})}}
\big|\Cov(\mathbf 1_G,\mathbf 1_H)\big|.
\end{equation}
Assume now that $d_n(A,B)\ge h$. Take arbitrary $u\in U_n(A)$ and $v\in U_n(B)$
Then there exist $a\in A$ and $b\in B$ such that $u\in B(a)$ and $v\in B(b)$.
By the triangle inequality and the definition of $r_n$,
\begin{align*}
    d_n(u,v)
\ge& d_n(a,b)-d_n(a,u)-d_n(b,v)\\
\ge& h-r_n-r_n
= h-2r_n.
\end{align*}

Hence $d_n(U_n(A),U_n(B))\ge h-2r_n$.
Therefore, by the definition of $\alpha^X_n(\cdot)$,
\[
\sup_{\substack{G\in \sigma(X_{U_n(A)})\\ H\in \sigma(X_{U_n(B)})}}
\big|\Cov(\mathbf 1_G,\mathbf 1_H)\big|
\le
\alpha^X_{n}\big(\max\{h-2r_n,0\}\big).
\]
Combining this with \eqref{eq:alpha_monotone} yields
\[
\sup_{\substack{G\in \sigma(Y_A)\\ H\in \sigma(Y_B)}}
\big|\Cov(\mathbf 1_G,\mathbf 1_H)\big|
\le
\alpha^X_{n}\big(\max\{h-2r_n,0\}\big).
\]
Taking the supremum over all $A,B\subset\mathcal V_n$ such that $d_n(A,B)\ge h$  we obtain that $\alpha^Y_n(h)\leq\alpha_n^X(h)$.
The final implication follows immediately when $\sup_n r_n<\infty$ and $\alpha^X_n(h)\to 0$ as $h\to\infty$.
\end{proof}

\begin{proof}[Proof of Theorem \ref{thm:cond_vector_clt}]
We argue via the Cram\'er--Wold device. Fix $\mathbf a\in\mathbb{R}^{m!-1}$ with $\|\mathbf a\|_2=1$ and define the scalar projection
\[
S_n(\mathbf a):=\mathbf a^\top S_n
=\sum_{v_s\in\mathcal V_n}\Big(\mathbf a^\top\mathbf Y_{s}-\E[\mathbf a^\top\mathbf Y_{s}  ]\Big).
\]
It suffices to show that $S_n(\mathbf a)$ satisfies a CLT for every such $\mathbf a$; the vector statement then
follows from the Cram\'er--Wold device.

To this end we apply  Theorem~3.2 of \cite{KOJEVNIKOV2021882} to the array $\big(\mathbf a^\top\mathbf Y_{s}\big)_{v_s\in\mathcal V_n}$. For this theorem to hold it requires Assumptions~2.1, 3.3, and 3.4 in  \cite{KOJEVNIKOV2021882}  to be satisfied. The rest of the proof is then dedicated to verify these assumptions. Assumption~2.1 is a network/density condition on $G$ and is guaranteed here by
\eqref{eq:A_mixing_small}; indeed, their Proposition~2.2 shows that Assumption~2.1 holds
whenever the underlying array is strongly mixing with coefficients controlled as in \eqref{eq:A_mixing_small}.
Moreover, Lemma~\ref{lem:alpha_mixing_OP_local} ensures that strong mixing of $(X_{s})_{v_s \in \mathcal{V}_s}$ transfers to
$(\mathbf Y_{s})_{v_s\in \mathcal{V}_n}$ (with the same mixing coefficients), since $\mathbf Y_{s}$ is a measurable functional of
$X_{s}$. Consequently, Assumption~2.1 follows from \eqref{eq:A_mixing_small}.

Assumption~3.3 is immediate because $\mathbf Y_{s}\in\{0,1\}^{m!-1}$, hence
$\big|\mathbf a^\top\mathbf Y_{s}\big|\le \|\mathbf a\|_1\le \sqrt{m!-1}$ for all $s$.
The main task is Assumption 3.4, for which it suffices to prove that there exists
$c>0$ such that
\[
\Var\!\big(S_n(\mathbf a)  \big)\ \ge\ c\,n
\qquad\text{a.s.}
\]
In fact, the above inequality alongside our Assumption \ref{assumption1} imply that Condition ND of \cite{KOJEVNIKOV2021882} hold (see their Assumption 3.4).  
Using variance decomposition and $x\ge -|x|$, we obtain
\begin{align*}
&\Var\!\big(S_n(\mathbf a)  \big)
\\&=
\sum_{v_s\in\mathcal V_n}\Var\!\big(\mathbf a^\top \mathbf Y_{s}  \big)
\\&+\sum_{\substack{v_s,v_t\in\mathcal V_n\\ s\neq t}}
\Cov\!\big(\mathbf a^\top \mathbf Y_{s},\mathbf a^\top \mathbf Y_{t}  \big)\notag\\
&\ge
\sum_{v_s\in\mathcal V_n}\Var\!\big(\mathbf a^\top \mathbf Y_{s}  \big)
\\&-\sum_{\substack{v_s,v_t\in\mathcal V_n\\ s\neq t}}
 \Big|\Cov\!\big(\mathbf a^\top \mathbf Y_{s},\mathbf a^\top \mathbf Y_{t}  \big)\Big| .
\label{eq:var_lower_bound_split_better}
\end{align*}
We treat the diagonal and off-diagonal terms separately. 

For the diagonal term, note that
\[
\Var\!\big(\mathbf a^\top \mathbf Y_{s}  \big)
=\mathbf a^\top \Var(\mathbf Y_{s}  )\mathbf a.
\]
Hence, a sufficient condition for a uniform lower bound is condition \eqref{eq:eig}. 
Under \eqref{eq:eig} and $\|\mathbf a\|_2=1$,
\[
\Var\!\big(\mathbf a^\top \mathbf Y_{s}  \big)\ge C_1
\Longrightarrow
\sum_{v_s\in\mathcal V_n}\Var\!\big(\mathbf a^\top \mathbf Y_{s}  \big)\ \ge\ n C_1.
\]

For the off-diagonal term, we use the fact that $\{\mathbf Y_{s}\}_{v_s\in\mathcal V_n}$ is conditionally strongly mixing
given $ $: there exists a sequence $\{\alpha_n(h)\}_{h\ge 0}$ such that
\[
\sup_{\substack{G\in \sigma(Y_A)\\ H\in \sigma(Y_B)}}
\big|\Cov(\mathbf 1_G,\mathbf 1_H)\big| \le\ \alpha_n(h)
\quad\text{when}\quad d_n(A,B)\ge h,
\]
where $d_n(\cdot,\cdot)$ denotes graph distance. This property will be ensured by Lemma~\ref{lem:alpha_mixing_OP_local}.
 Assuming strongly mixing means that we have control over the covariance as the distance among nodes increase. In particular by Proposition 2.2 of \cite{KOJEVNIKOV2021882}
\begin{align*}
    & \Cov\!\big(\mathbf a^\top \mathbf Y_{s},\mathbf a^\top \mathbf Y_{t}  \big)  \\&\leq \| \mathbf a^\top \|_\infty \|\mathbf a^\top \|_\infty \alpha_n (d_n(v_s,v_t))\;.
\end{align*}
Since $\|\mathbf{a}\|_2=1$,
\begin{align*}
   & \sum_{\substack{v_s,v_t\in\mathcal V_n\\ s\neq t}}
\Big| \Cov \!\big(\mathbf a^\top \mathbf Y_{s},\mathbf a^\top \mathbf Y_{t,n}  \big)\Big|  \leq \sum_{\substack{v_s,v_t\in\mathcal V_n\\ s\neq t}}\alpha_n (d_n(s,t))\;.
\end{align*}
We go by diagonals. For $r\geq 1$, define
$$U_n(h):=\#\{ (v_s,v_t)\in \mathcal{V}_n\times \mathcal{V}_n\,:\, d_n(v_s,v_t)=h\}\;.$$
Then, 
\begin{align*}
   & \sum_{\substack{v_s,v_t\in\mathcal V_n\\ s\neq t}}
 \Cov\!\big(\mathbf a^\top \mathbf Y_{s},\mathbf a^\top \mathbf Y_{t}  \big) \leq  \sum_{h\geq 1}\alpha_n (h)U_n(h)\;.
\end{align*}
Hence, via assumption \eqref{eq:A_mixing_small}
\[
\Var\!\big(S_n(\mathbf a)\big)\ge
\big(C_2-C_1\big)n
=:cn
\qquad\text{a.s.}
\]
with \(c>0\). This proves the required uniform lower bound on the variance, and therefore verifies Assumption~3.4, which finishes the proof of the theorem.
\end{proof}

\section{Auxiliary Results}
\label{appendix:auxiliary}
\begin{definition}
Let \((\Omega,\mathcal F,\mathbb P)\) be a probability space, and let
\(\mathcal A,\mathcal B\subset \mathcal F\) be two sub-\(\sigma\)-fields.
Their maximal correlation coefficient is defined by
\[
\rho(\mathcal A,\mathcal B)
:=
\sup_{\substack{U\in L^2(\mathcal A),\,V\in L^2(\mathcal B)\\
\Var(U),\,\Var(V)>0}}
\frac{|\Cov(U,V)|}{\sqrt{\Var(U)\Var(V)}}.
\]
\end{definition}

\begin{definition}
Let \(H_1,H_2\subset L^2(\Omega,\mathcal F,\mathbb P)\) be two closed linear subspaces.
Their maximal correlation coefficient is defined by
\[
\rho(\mathcal A,\mathcal B)
:=
\sup_{\substack{U\in H_1,\,V\in H_2\\
\Var(U),\,\Var(V)>0}}
\frac{|\Cov(U,V)|}{\sqrt{\Var(U)\Var(V)}}.
\]

\end{definition}

The following result corresponds to Theorem 1 in \cite{KolRoz60}. 
\begin{theorem}
\label{thm:KR_gaussian_maxcorr}
Let \(\{\xi_i\}_{i\in I}\) and \(\{\eta_j\}_{j\in J}\) be two jointly Gaussian families of square-integrable random variables in \(L^2(\Omega,\mathcal F,\mathbb P)\).
Let
\[
\mathcal F_\xi:=\sigma(\xi_i:i\in I),
\qquad
\mathcal F_\eta:=\sigma(\eta_j:j\in J),
\]
and let
\[
H_\xi:=\overline{\mathrm{span}\{\xi_i:i\in I\}}^{\,L^2},
\qquad
H_\eta:=\overline{\mathrm{span}\{\eta_j:j\in J\}}^{\,L^2}.
\]
Then
\[
\rho(\mathcal F_\xi,\mathcal F_\eta)=\rho(H_\xi,H_\eta).
\]
\end{theorem}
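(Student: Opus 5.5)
This is the Kolmogorov--Rozanov theorem, and I would prove it by approximating the generated $\sigma$-fields with polynomial (Hermite/Wick) chaos and then using Gaussian orthogonality to reduce any covariance to first-order terms.

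\textbf{Easy inequality.} Since $H_\xi\subset L^2(\mathcal F_\xi)$ and $H_\eta\subset L^2(\mathcal F_\eta)$, and the supremum defining $\rho(\mathcal F_\xi,\mathcal F_\eta)$ runs over a larger class, we get immediately
\[
\rho(H_\xi,H_\eta)\le\rho(\mathcal F_\xi,\mathcal F_\eta).
\]

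\textbf{Setup for the reverse inequality.} I would first center all variables; this changes neither $\sigma$-field nor any covariance. Write $r:=\rho(H_\xi,H_\eta)\le 1$, and suppose $r<1$ (otherwise there is nothing to prove). The plan is to show $|\Cov(U,V)|\le r\,\Var(U)^{1/2}\Var(V)^{1/2}$ for all $U\in L^2(\mathcal F_\xi)$, $V\in L^2(\mathcal F_\eta)$.

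\textbf{Chaos decomposition and density.} Consider the Wiener--It\^o decomposition
\[
L^2(\mathcal F_\xi)=\bigoplus_{k\ge 0}H_\xi^{:k:},
\]
where $H_\xi^{:k:}$ is the $k$-th homogeneous chaos: the closed span of Wick products $:\!\zeta_1\cdots\zeta_k\!:$ with $\zeta_i\in H_\xi$, equivalently of Hermite polynomials $H_k(\zeta)$ with $\zeta\in H_\xi$ and $\|\zeta\|=1$. The analogous decomposition holds for $\eta$. Density of polynomials (or of exponentials $e^{\zeta}$) in $L^2$ of a Gaussian $\sigma$-field is standard, and it lets me assume that $U$ and $V$ are finite sums of chaos components with zero $0$-th component.

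\textbf{Key Gaussian fact.} For jointly Gaussian unit vectors $\zeta\in H_\xi$ and $\zeta'\in H_\eta$,
\[
\E[H_j(\zeta)H_k(\zeta')]=\delta_{jk}\,k!\,\langle\zeta,\zeta'\rangle^k .
\]
By multilinearity the same orthogonality holds for Wick products: chaoses of different orders are uncorrelated across the two families. Moreover, on the $k$-th chaos the cross-covariance operator is the $k$-fold symmetric tensor power $T^{\otimes k}$ of the operator $T=P_{H_\eta}|_{H_\xi}$, and $\|T\|=r$. Hence
\[
|\Cov(U_k,V_k)|\le\|T\|^k\|U_k\|\,\|V_k\|\le r^k\|U_k\|\,\|V_k\|\le r\,\|U_k\|\,\|V_k\| .
\]
Summing over $k\ge 1$ and applying Cauchy--Schwarz gives
\[
|\Cov(U,V)|\le r\,\|U\|\,\|V\| = r\,\Var(U)^{1/2}\Var(V)^{1/2},
\]
which is the reverse inequality after the density approximation.

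\textbf{Main obstacle.} The hard step is the tensor-power bound $\|T^{\otimes k}\|\le\|T\|^k$ on the symmetric chaos: one has to identify $H_\xi^{:k:}$ isometrically with the symmetric tensor power of $H_\xi$, with norm $\sqrt{k!}$ times the tensor norm. I would handle this via the isometry $:\!\zeta^k\!:\;\mapsto\;\zeta^{\otimes k}$, extended by polarization, and then use that the operator norm is multiplicative under tensor products. An alternative route, if this gets messy, is the Mehler/Ornstein--Uhlenbeck semigroup argument in the spirit of Gebelein's inequality.
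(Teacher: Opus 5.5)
The paper gives no proof of this statement. It quotes it as Theorem~1 of \cite{KolRoz60}, and only the finite-dimensional form (Remark~\ref{remark:Gaussian}) is used, inside Proposition~\ref{prop:gaussian_alpha_bound}.

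Your argument is a correct, self-contained second-quantization proof. The map $H_k(\zeta)\mapsto\sqrt{k!}\,\zeta^{\otimes k}$ (for $\|\zeta\|=1$) preserves inner products on a spanning set of the $k$-th chaos of the joint Gaussian space $\overline{H_\xi+H_\eta}$, so it extends to an isometry onto the symmetric tensor power. Chaoses of different orders are orthogonal in that joint space. The cross-covariance on the $k$-th chaos is represented by $T^{\otimes k}$, with $\|T^{\otimes k}\|=\|T\|^k=r^k\le r$. Cauchy--Schwarz over $k\ge 1$ then gives $|\Cov(U,V)|\le r\,\|U\|\,\|V\|$.

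The classical argument (essentially that of \cite{KolRoz60}) is built differently:
\begin{itemize}
\item it reduces to finitely many variables;
\item it uses canonical correlations to choose orthonormal bases with $\E[\xi_i'\eta_j']=\delta_{ij}r_i$, so the problem splits into independent bivariate Gaussian pairs;
\item it expands in products of one-dimensional Hermite polynomials;
\item it then passes to arbitrary families by approximation.
\end{itemize}
Your route handles arbitrary index sets and a possibly non-compact cross-projection $T$ in one stroke. This matters because canonical correlations need not exist in infinite dimensions. The price is importing the Wiener--It\^o decomposition and the Fock-space isometry. The classical route needs only finite-dimensional linear algebra plus a limiting step, and its finite-dimensional core is all the paper actually uses.

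Do state explicitly that ``jointly Gaussian'' means the union of the two families is jointly Gaussian. Both the formula $\E[H_j(\zeta)H_k(\zeta')]=\delta_{jk}\,k!\,\langle\zeta,\zeta'\rangle^k$ and the cross-orthogonality of chaoses are computed in the joint space. Without this assumption the statement fails. Take $\xi\sim\mathcal N(0,1)$ and $\eta=S\xi$ with $S$ an independent fair random sign. Then each family is Gaussian and $\rho(H_\xi,H_\eta)=0$, but $\xi^2=\eta^2$ forces $\rho(\mathcal F_\xi,\mathcal F_\eta)=1$.
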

In plain words, for jointly Gaussian families, the maximal correlation between the
\(\sigma\)-fields they generate is equal to the maximal correlation between
their closed linear spans in \(L^2\). 
\begin{remark}
\label{remark:Gaussian}
Theorem~\ref{thm:KR_gaussian_maxcorr} is particularly useful when
\[
\xi=(\xi_1,\dots,\xi_p),\qquad \eta=(\eta_1,\dots,\eta_q)
\]
are finite-dimensional jointly Gaussian vectors. In that case,
\[
\rho(\mathcal F_\xi,\mathcal F_\eta)
=
\sup_{a\in\mathbb R^p,\; b\in\mathbb R^q}
\mathrm{Corr}(a^\top \xi,\; b^\top \eta).
\]
\end{remark}

\begin{proposition}\label{prop:gaussian_alpha_bound}
Let \(p,q\in\mathbb N\), and let
\[
\xi=(\xi_1,\dots,\xi_p),\qquad \eta=(\eta_1,\dots,\eta_q)
\]
be jointly Gaussian random vectors. For \(A\subset\{1,\dots,p\}\) and \(B\subset\{1,\dots,q\}\), write
\[
\xi_A:=(\xi_u)_{u\in A},\qquad \eta_B:=(\eta_v)_{v\in B},
\]
and let
\[
\mathcal F_A:=\sigma(\xi_u:u\in A),\qquad
\mathcal F_B:=\sigma(\eta_v:v\in B)
\]
be two $\sigma$-fields.
Assume that there exist constants \(M>0\) and \(\lambda_0>0\) such that
\[
|\Cov(\xi_u,\eta_v)|\le M
\qquad\text{for all }u=1,\dots,p,\ v=1,\dots,q,
\]
and the minimum eigenvalues satisfy
\[
\lambda_{\min}\!\bigl(\Cov(\xi)\bigr)\ge \lambda_0,
\qquad
\lambda_{\min}\!\bigl(\Cov(\eta)\bigr)\ge \lambda_0.
\]
Then there exists a constant \(C=C(p,q,\lambda_0)>0\) such that, for all
\(A\subset\{1,\dots,p\}\) and \(B\subset\{1,\dots,q\}\), and \[
\alpha(\mathcal A,\mathcal B)
:=
\sup_{A\in\mathcal A,\;B\in\mathcal B}
\big|\mathbb P(A\cap B)-\mathbb P(A)\mathbb P(B)\big|,
\] it holds
\[
\alpha(\mathcal F_A,\mathcal F_B)\le C\,M.
\]
More precisely, for 
\[
\alpha(\mathcal F_A,\mathcal F_B)
\le
\frac{1}{4}\frac{\sqrt{|A||B|}}{\lambda_0}\,M.
\]
\end{proposition}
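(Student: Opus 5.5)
The plan is to reduce the $\alpha$-mixing coefficient to the maximal correlation coefficient and then bound the latter using the covariance bound $M$ and the eigenvalue bound $\lambda_0$.

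\textbf{Step 1 (from $\alpha$ to $\rho$).} For any sub-$\sigma$-fields and events $E\in\mathcal F_A$, $F\in\mathcal F_B$, we have $|\mathbb P(E\cap F)-\mathbb P(E)\mathbb P(F)|=|\Cov(\mathbf 1_E,\mathbf 1_F)|$. Cauchy--Schwarz with $\Var(\mathbf 1_E)\le 1/4$ then gives
\[
\alpha(\mathcal F_A,\mathcal F_B)\le \tfrac14\,\rho(\mathcal F_A,\mathcal F_B).
\]
This step produces the factor $1/4$.

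\textbf{Step 2 (Gaussian reduction).} Since $\xi_A$ and $\eta_B$ are finite-dimensional jointly Gaussian vectors, Theorem~\ref{thm:KR_gaussian_maxcorr} (in the form of Remark~\ref{remark:Gaussian}) gives
\[
\rho(\mathcal F_A,\mathcal F_B)=\sup_{a,b}\mathrm{Corr}(a^\top\xi_A,b^\top\eta_B).
\]
Write $\Sigma_{AB}=\Cov(\xi_A,\eta_B)$. Then
\[
\mathrm{Corr}(a^\top\xi_A,b^\top\eta_B)=\frac{a^\top\Sigma_{AB}b}{\sqrt{a^\top\Sigma_{AA}a\; b^\top\Sigma_{BB}b}}.
\]

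\textbf{Step 3 (bounding the correlation).} Principal submatrices inherit the eigenvalue bound by Cauchy interlacing, so $\lambda_{\min}(\Sigma_{AA})\ge\lambda_0$ and $\lambda_{\min}(\Sigma_{BB})\ge\lambda_0$. The denominator is therefore at least $\lambda_0\|a\|_2\|b\|_2$. For the numerator, the entrywise bound $|\Cov(\xi_u,\eta_v)|\le M$ gives
\[
|a^\top\Sigma_{AB}b|\le \|\Sigma_{AB}\|_2\|a\|_2\|b\|_2\le \|\Sigma_{AB}\|_F\|a\|_2\|b\|_2\le \sqrt{|A||B|}\,M\,\|a\|_2\|b\|_2 .
\]
Combining these,
\[
\rho\le \frac{\sqrt{|A||B|}\,M}{\lambda_0},
\]
and with Step 1 we obtain the stated inequality. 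Since $|A|\le p$ and $|B|\le q$, we can take $C=\sqrt{pq}/(4\lambda_0)$.

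The main delicate point is Step 2. Theorem~\ref{thm:KR_gaussian_maxcorr} must be applied to the subfamilies indexed by $A$ and $B$, whose joint Gaussianity is inherited from $(\xi,\eta)$. One also has to make sure that a degenerate linear combination with zero variance cannot occur, and the eigenvalue bound rules this out. The remaining steps are routine linear algebra.
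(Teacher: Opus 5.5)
Your proof is correct and follows the same route as the paper's: $\alpha\le\frac14\rho$, the Kolmogorov--Rozanov reduction to linear combinations, the Frobenius-norm bound on $\Sigma_{AB}$, and the eigenvalue lower bound on the variances. Your use of interlacing to get $\lambda_{\min}(\Sigma_{AA}),\lambda_{\min}(\Sigma_{BB})\ge\lambda_0$ makes the constant explicit, so you actually obtain the stated bound $\frac14\sqrt{|A||B|}\,M/\lambda_0$, which the paper's proof (with its unspecified $c_{A,B}$ and maximum over finitely many $A,B$) leaves implicit. One small correction: $\alpha\le\frac14\rho$ comes from the definition of $\rho$ together with $\Var(\mathbf 1_E)\le\frac14$, not from Cauchy--Schwarz alone.
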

\begin{proof}
Fix \(A\subset\{1,\dots,p\}\) and \(B\subset\{1,\dots,q\}\), and set
\[
\Sigma_{AB}:=\Cov(\xi_A,\eta_B).
\]
By Remark~\ref{remark:Gaussian},
\[
\rho(\mathcal F_A,\mathcal F_B)
=
\sup_{a\in\mathbb R^{|A|},\,b\in\mathbb R^{|B|}}
\mathrm{Corr}(a^\top \xi_A,\; b^\top \eta_B).
\]
Hence
\[
\rho(\mathcal F_A,\mathcal F_B)
=
\sup_{a,b}
\frac{|a^\top \Sigma_{AB} b|}
{\sqrt{\Var(a^\top \xi_A)\Var(b^\top \eta_B)}}.
\]
Since every entry of \(\Sigma_{AB}\) is bounded by \(M\), we have
\[
\|\Sigma_{AB}\|_{2}
\le \|\Sigma_{AB}\|_{\mathrm F}
\le \sqrt{|A||B|}\,M.
\]
Therefore,
\[
|a^\top \Sigma_{AB} b|
\le \|\Sigma_{AB}\|_{2}\,\|a\|_2\,\|b\|_2
\le \sqrt{|A||B|}\,M\,\|a\|_2\,\|b\|_2.
\]

Now, for each fixed pair \((A,B)\), since \(\xi_A\) and \(\eta_B\) are, by assumption, finite non degenerate-dimensional Gaussian vectors and the smallest eigenvalue is bounded away from zero by a constant independent of $A$ and $B$, there exists a constant \(c_{A,B}>0\) such that
\[
\Var(a^\top \xi_A)\ge c\|a\|_2^2,
\qquad
\Var(b^\top \eta_B)\ge c\|b\|_2^2
\]
for all \(a\in\mathbb R^{|A|}\), \(b\in\mathbb R^{|B|}\). Hence
\[
\rho(\mathcal F_A,\mathcal F_B)\le C_{A,B}\,M
\]
for some constant \(C_{A,B}>0\). Since there are only finitely many choices of \(A\) and \(B\), we may take
\[
C:=\max_{A,B} C_{A,B}<\infty,
\]
and obtain
\[
\rho(\mathcal F_A,\mathcal F_B)\le C\,M
\qquad\text{for all }A,B.
\]
Finally, using the inequality
\[
\alpha(\mathcal F_A,\mathcal F_B)\le \frac14 \rho(\mathcal F_A,\mathcal F_B),
\]
we conclude that
\[
\alpha(\mathcal F_A,\mathcal F_B)\le \frac{C}{4}\,M.
\]
Absorbing the factor \(1/4\) into the constant completes the proof.
\end{proof}

\end{document}